%% file: ISCA2025ACMtemplate/main.tex
\documentclass[conference]{IEEEtran}
\IEEEoverridecommandlockouts

\usepackage{cite}
\usepackage{amsmath,amssymb,amsfonts}
\usepackage{algorithmic}
\usepackage{graphicx}
\usepackage{textcomp}
\usepackage{xcolor}
\usepackage{multirow}
\usepackage{threeparttable}
\usepackage[hyphens]{url}
\usepackage{graphicx}
\usepackage{enumitem}
\usepackage{subcaption}    
\usepackage{booktabs}
\usepackage{siunitx}
\usepackage{makecell}
\usepackage{xspace}
\usepackage{adjustbox}
\usepackage{pifont}
\usepackage{amsthm}
\newtheorem{theorem}{Theorem}

\usepackage[most]{tcolorbox}

\usepackage{listings}

\newcommand{\diff}[1]{{\color{black} #1}}

\newcommand{\sys}{\textsc{Lippen}\xspace}

\lstdefinestyle{CStyle}{
  language=C,
  basicstyle=\ttfamily\small,
  keywordstyle=\color{blue}\bfseries,
  numbers=left, numberstyle=\tiny\color{gray},
  numbersep=-30pt, 
  xleftmargin=-32pt,  
  framexleftmargin=0pt,
  columns=fullflexible, keepspaces=true,
  aboveskip=0pt, belowskip=0pt
}

\def\BibTeX{{\rm B\kern-.05em{\sc i\kern-.025em b}\kern-.08em
    T\kern-.1667em\lower.7ex\hbox{E}\kern-.125emX}}
\begin{document}

\pdfpagewidth=8.5in
\pdfpageheight=11in

\newcommand{\iscasubmissionnumber}{NaN}

\pagenumbering{arabic}

\title{\sys: A Lightweight In-Place Pointer Encryption Architecture for Pointer Integrity}


\author{%
\begin{tabular}{cccc}
Erfan Iravani &
Lalit Prasad Peri &
Mohannad Ismail &
Charitha Tumkur Siddalingaradhya \\
\textit{Virginia Tech} &
\textit{Virginia Tech} &
\textit{Virginia Tech} &
\textit{Virginia Tech} \\
{\small erfani@vt.edu} &
{\small lalitprasad@vt.edu} &
{\small imohannad@vt.edu} &
{\small charitha24@vt.edu}
\\[1em]
\multicolumn{1}{c}{Changwoo Min} &
\multicolumn{2}{c}{Elif Bilge Kavun} &
\multicolumn{1}{c}{Wenjie Xiong} \\
\multicolumn{1}{c}{\textit{Igalia}} &
\multicolumn{2}{c}{\textit{Barkhausen Institut \& TU Dresden}} &
\multicolumn{1}{c}{\textit{Virginia Tech}} \\
\multicolumn{1}{c}{{\small changwoo@igalia.com}} &
\multicolumn{2}{c}{{\small elif.kavun@barkhauseninstitut.org}} &
\multicolumn{1}{c}{{\small wenjiex@vt.edu}}
\end{tabular}%
}
\maketitle
\thispagestyle{plain}
\pagestyle{plain}


\begin{abstract}
Memory-safety violations in C and C++ programs continue to enable sophisticated exploitation techniques such as control-flow hijacking and data-oriented attacks. Existing hardware defenses either rely on address space layout randomization (ASLR) or attach explicit metadata to pointers to verify their integrity. External metadata schemes provide strong guarantees, but incur additional memory accesses and memory footprint overhead. In-place authentication mechanisms, such as ARM Pointer Authentication (PAC), achieve low overhead at the cost of limited entropy and susceptibility to brute-force and reuse attacks. 
This paper presents \sys, a hardware–software co-design for \emph{full-pointer encryption} that provides strong pointer integrity and confidentiality with zero metadata overhead. \sys treats every pointer as an encrypted block, cryptographically binding it to its execution context and decrypting it transparently at dereference time. By re-purposing the entire 64-bit pointer field for encryption rather than preserving raw address bits, \sys maximizes entropy, eliminates the brute-force weaknesses of truncated authentication codes, and maintains binary compatibility with existing PAC-enabled software. 
We prototype \sys on FPGA using 64-bit RISC-V Rocket and BOOM cores, and evaluate it with microbenchmarks, nbench, and SPEC CPU2017. We compare against both an in-house RISC-V PAC implementation and Apple's PAC on the M1 processor. Across these workloads, \sys provides comprehensive pointer protection with runtime overhead comparable to PAC-based schemes, while incurring negligible area and power overhead. These results show that \sys is a practical design point for deploying strong pointer protection in real processors.

\end{abstract}
\input{1.introduction}

\input{2.background}

\input{3.threat_model}

\input{4.Pointer_encryption}

\input{5.implementation}

\input{6.evaluation}
\input{6.discussion}

\input{7.related}

\input{8.conclusion}
\appendix
\input{9.appendix}


\bibliographystyle{IEEEtranS}
\bibliography{ISCA2025ACMtemplate/sample-base}

\end{document}

%% file: 1.introduction.tex
\section{Introduction}

Modern software systems remain vulnerable to increasingly sophisticated memory corruption exploits. Among the most prevalent and powerful are \emph{control-flow hijacking} and \emph{data-oriented} attacks, which continue to endanger critical infrastructure—from kernels and hypervisors to browsers and database engines. Control-flow hijacking exploits corrupted code pointers such as return addresses, function pointers, or virtual table entries to redirect execution toward attacker-controlled instructions or gadgets. In contrast, data-oriented attacks manipulate data pointers and non-control data to steer legitimate computations toward malicious outcomes without violating the program’s control-flow graph. Together, these attack classes enable arbitrary code execution, privilege escalation, and logic subversion even in hardened environments.

Recent research has explored a wide range of  defenses against the pointer forgery attacks. Broadly, these efforts fall into two categories: \emph{address layout randomization} and \emph{metadata augmentation for integrity}. 
Address layout randomization schemes, such as ASLR~\cite{shacham2004effectiveness}, introduce spatial and temporal unpredictability to memory layouts and thus pointers will be shuffled with memory layout randomization. While effective at increasing attack complexity, these defenses rely on a single random offset for each domain’s address space, providing only limited entropy; once the layout is disclosed, the protection collapses.

Alternatively, metadata can serve either as a cryptographic Message Authentication Code (MAC) checked on pointer use~\cite{mashtizadeh2015ccfi,liljestrand2019pac}, or as a capability/permission that prevents unauthorized pointer modification~\cite{ziad2021zero,woodruff2014cheri,lemay2021cryptographic}. Most defenses~\cite{mashtizadeh2015ccfi,ziad2021zero,woodruff2014cheri} store this metadata in auxiliary structures, providing strong protection but incurring substantial memory overhead or added hardware/ISA management complexity. In contrast, in-place mechanisms~\cite{liljestrand2019pac,lemay2021cryptographic} reuse the 64-bit word containing the pointer itself, eliminating metadata memory overhead.
A prominent realization of in-place pointer integrity protection is the \emph{Pointer Authentication Code} (PAC) mechanism introduced in ARMv8.3-A and later architectures~\cite{qualcomm_pac_v7}. PAC leverages the unused high-order bits of 64-bit virtual addresses (7-16 bits~\cite{lelegard_pac_format}) to store a MAC alongside the pointer, incurring no additional memory overhead~\cite{liljestrand2019pac}.
It uses a lightweight block cipher to generate a short MAC over a pointer, combining a secret key with a contextual modifier.
On dereference, hardware verifies the MAC before use, preventing straightforward pointer corruption and cross-context pointer reuse.
PAC has  been extensively leveraged in the research community to enforce control-flow integrity, type safety, and memory safety~\cite{liljestrand2019pac, ismail2022tightly, farkhani2021ptauth, li2022pacmem, kim2020hardware}, serving as the foundation for numerous hardware-assisted pointer-protection schemes.
In practice, PAC is widely deployed across Arm-based systems to protect return addresses, function pointers, and virtual tables. It secures kernel and user-space control flow on platforms such as Apple \emph{arm64e}, Android, and Windows on Arm \cite{apple_pac_dyld, linux_pac_docs, windows_pac_arm64, clang_pac_docs, apple_PA}. 
These uses make PAC the most prevalent hardware mechanism for enforcing pointer integrity in both commodity and experimental systems.

However, because the authentication code must fit within these unused bits, the effective entropy is small (usually $<24$ bits), making brute-force guessing of valid codes feasible for attackers. 
Other in-place schemes~\cite{lemay2021cryptographic,armmte} are also susceptible to brute-force due to low entropy \cite{na2023penetrating, kim2025tiktag}.
A pointer has 64 bits, and thus, in theory, a protection scheme can raise the brute-force space to $2^{64}$ without additional memory. In the meantime, the information of the pointer value should still be stored in the 64 bits. 
In PAC, pointer values are directly kept in the 64 bits, limiting the brute-force space. On the other hand, 
PAC-protected pointer values are not valid for ordinary use until authentication strips the PAC; arithmetic on the raw pointer value will corrupt the authentication state and cause subsequent checks to fail.

We propose \sys, an architecture that uses a lightweight block cipher in the Electronic Code Book (ECB) mode to {\em fully} encrypt the 64-bit pointers for integrity protection. 
By replacing each 64-bit pointer with its encrypted representation, the architecture can still retrieve the original value when a pointer is dereferenced.  
Fully encrypting the pointer removes the entropy limitation of PAC and prevents attackers from forging valid pointer values through brute-force attacks. Still, \sys's encryption primitive and ISA design provide a similar programming interface to Arm PAC, supporting all protection policies built upon PAC.

In general, encryption alone does not guarantee integrity. Prior work~\cite{madan2005stackoffence,cowan2003pointguard} proposed CTR-mode encryption for pointer protection, but CTR remains vulnerable to targeted bit-flip attacks.
C3~\cite{lemay2021cryptographic} uses partial pointer encryption to detect corrupted pointers, but provides limited security, targeting a 1/16 bypass probability under its threat model. 
To the best of our knowledge, we are the first to evaluate the security and performance of fully encrypting the 64-bit pointer for integrity.

Although encryption and decryption introduce latency overhead, lightweight block ciphers such as PRINCEv2 make full pointer encryption practical with a smaller performance overhead than Arm PAC. 
Compared to MAC-based authentication like PAC, full-pointer encryption delivers much stronger integrity guarantees in both conventional and transient execution, closing the brute-force gap inherent in in-place authentication codes while preserving PAC’s compatibility advantages.
We make the following contributions:

\begin{figure}[t]
    \centering
    \begin{minipage}{0.45\columnwidth}
        \centering
        \begin{lstlisting}[style=CStyle, numbers=none, xleftmargin=0pt]
int VulnFunction(char *p)
{
    char buf[40];
    strcpy(buf, p);
    return 0;
}
        \end{lstlisting}
        \vspace{2mm} 
        \subcaption*{(a) Vulnerable function}
        \vspace{-6mm}
        \label{fig:vul_code}
    \end{minipage}
    \hspace{5mm} 
    \begin{minipage}{0.40\columnwidth} 
        \centering 
        \includegraphics[width=0.7\linewidth]{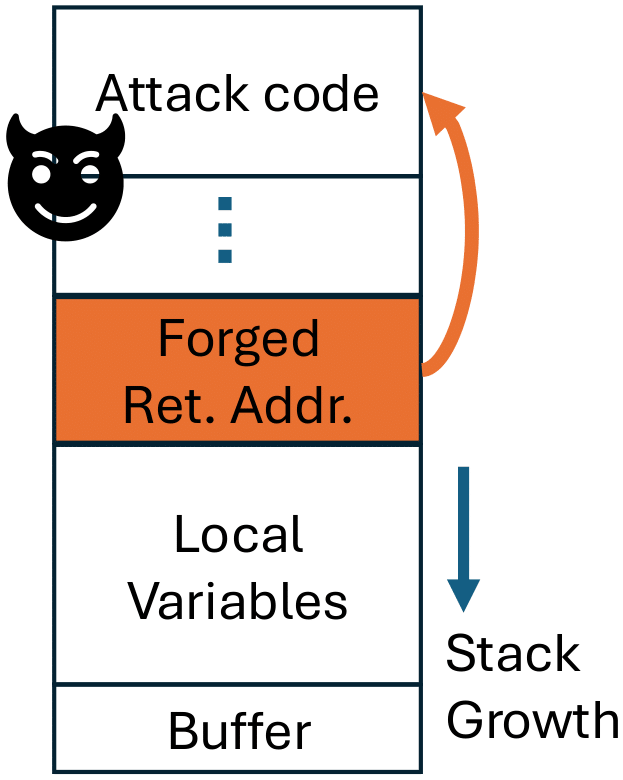}
        \vspace{1mm}
        \subcaption*{(b) Unprotected stack}
        \label{fig:unp_stack}
    \end{minipage}
    \vspace{-1pt}
    \caption{Memory safety exploitation.}
    \label{fig:mem_safty}
\end{figure}

\begin{itemize}
    \item \textbf{Design of \sys.} We propose \sys, a cryptography-based pointer encryption architecture that provides brute-force-resilient protection with lower latency overhead than existing pointer authentication mechanisms. \sys introduces a compatible ISA that leverages existing PAC compiler infrastructure across protection policies. 
    \item \textbf{Co-design system and cipher, Security analysis.} 
    By considering how contextual modifiers are used in practice, we co-design the system and the cipher to avoid using a more expensive tweakable cipher while still providing enough modifier bits.  We formally show that if an attacker were able to forge a valid encrypted pointer, one could construct a proxy adversary capable of launching a chosen-ciphertext attack on the underlying block cipher.

    \item \textbf{Implementation and evaluation.} 
    We implement \sys and a baseline PAC design on RISC-V Rocket and BOOM cores on an FPGA prototype, and additionally examine Apple M1 as a real-world PAC deployment. We evaluate using targeted microbenchmarks, nbench, and SPEC CPU2017, analyzing data-pointer and return-address protection, speculation effects, and compatibility with prior PAC-based compiler passes.
    Results show that \sys achieves performance comparable to or better than PAC while providing substantially stronger security guarantees. Our implementation and compiler are open-sourced at \url{https://github.com/bearhw/LIPPEN}. 

\end{itemize}

%% file: 2.background.tex
\section{Background}

\subsection{Control Flow Hijacking and Data-Oriented Attacks}

C and C++ underpin kernels, hypervisors, browsers, and high-performance libraries since they offer tight control over layout and performance. The same low-level control, however, exposes programs to \emph{memory safety} violations: \emph{spatial} errors (out-of-bounds reads/writes, type confusion) and \emph{temporal} errors (use-after-free, double free, dangling pointers). These defects arise from unchecked pointer arithmetic, manual lifetime management, and implicit casts, and they commonly yield \emph{arbitrary read/write} primitives after exploitation~\cite{unsafeC,ccplusplusbugs}.

Once an attacker can corrupt memory, the next step is often \emph{control-flow hijacking}—diverting the program’s execution to attacker-chosen code or gadgets. Classic stack-based overflows overwrite return addresses or saved frame pointers (\emph{stack smashing})~\cite{smashingstack}, as shown in Figure~\ref{fig:mem_safty}, while heap-based corruptions target function pointers, C++ vtable pointers, longjmp buffers, PLT entries, or indirect branch targets~\cite{vtableattack}. Modern exploits prefer code reuse, chaining short instruction sequences to build \emph{Return/Jump/Call}-oriented programming payloads~\cite{rop, jop, cop}. Memory-safety bugs thus routinely evolve into pointer-forgery primitives that enable both classic and modern exploitation techniques. 

Return-Oriented Programming (ROP)~\cite{rop, carlini:rop} exemplifies how overwriting code pointers or return addresses yields full control of execution without code injection. To counter ROP, Control-Flow Integrity (CFI)~\cite{cfi} was proposed to ensure that execution follows only legitimate paths derived from the program’s control-flow graph, preventing hijacking through corrupted control data such as return addresses or function pointers. However, numerous bypasses have been demonstrated. Counterfeit Object-Oriented Programming (COOP)~\cite{cop} and other CFI-bypass attacks~\cite{cfi-bypass} show how virtual-table and object-pointer corruption can subvert C++ dispatch even when coarse-grained CFI is present, using techniques such as control-flow bending~\cite{carlini:cfb}. More advanced attacks, including Control Jujutsu~\cite{cjujutsu-evans-ccs15}, NEWTON~\cite{van:newton}, and AOCR~\cite{rudd:aocr}, exploit dynamic analysis to bypass even fine-grained CFI protections.

Even when direct control flow is guarded, attackers can compromise program behavior through \emph{data-oriented} and \emph{data-flow} attacks~\cite{dopi, dfa}. Data-Oriented Programming (DOP)~\cite{dopi} shows that corrupting data pointers or non-control data can achieve powerful, semantics-preserving computation without altering branch targets.  
These works show that protecting only control-flow or data values in isolation is insufficient. To substantially raise the bar against modern exploitation, defenders must ensure the integrity of both \emph{code pointers} and \emph{data pointers}, providing comprehensive protection against control-flow hijacking, data-oriented manipulation, and emerging attacks on pointer-authentication schemes.

Recent advances in memory-safe languages such as \emph{Rust} significantly reduce vulnerabilities by enforcing strong ownership and lifetime semantics at compile time. However, Rust cannot eliminate all memory-safety risks: interoperability with legacy C/C++ code, the use of \texttt{unsafe} blocks, and low-level system interfaces can still reintroduce pointer corruption. Moreover, large existing software ecosystems written in C and C++ cannot be easily rewritten in Rust. Consequently, complementary hardware mechanisms that ensure pointer integrity remain essential to securing modern systems end-to-end.

\begin{figure}[t]
    \centering
    \includegraphics[width=0.4\textwidth]{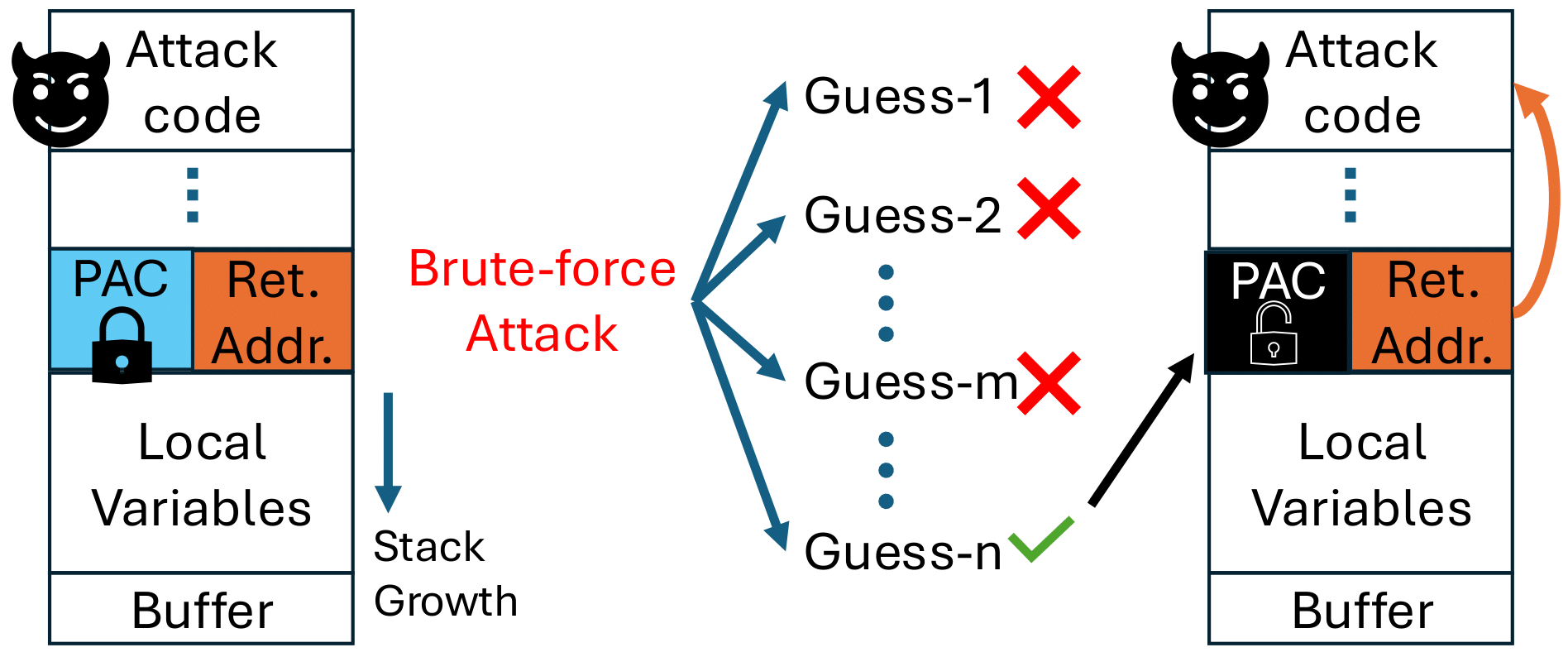}
    \caption{PAC defense and brute-force attack on PAC}
    \label{fig:brute-force}
\end{figure}

\subsection{Pointer Authentication Code} 

Pointer Authentication (PA), introduced in Armv8.3-A, embeds a cryptographic Pointer Authentication Code (PAC) in the unused high-order bits of 64-bit pointers~\cite{arm16}, computed from the \textit{pointer}, a 64-bit \textit{modifier} (context), and a \textit{secret key} via dedicated instructions (e.g., \texttt{PACIA} to sign, \texttt{AUTIA} to verify). On authentication failure, indicating illicit pointer modification, the architecture either corrupts the pointer's top bits to render it unusable (Armv8.3~\cite{apple_PA}) or raises a synchronous exception (Armv8.6~\cite{v8.6}).

By signing return addresses and function pointers, PA thwarts control-flow hijacking attacks such as ROP and JOP, and significantly raises the bar against data-oriented exploits, as attackers must forge valid PAC values to hijack control or data pointers~\cite{arm_rop}.

PAC has served as the foundation for numerous research prototypes that extend pointer authentication to enforce broader security properties such as type safety, temporal safety, and data integrity. Examples include PARTS~\cite{liljestrand2019pac}, which introduces PAC-based control-flow integrity;
PACStack~\cite{PACStack}, which chains authenticated return addresses to strengthen backward-edge protection; 
PTAuth~\cite{farkhani2021ptauth}, which dynamically detects temporal memory corruptions; 
PACMem~\cite{li2022pacmem}, which unifies spatial and temporal protection; 
AOS~\cite{kim2020hardware}, which combines PAC with architectural metadata for object safety; 
PacTight~\cite{ismail2022tightly}, which enforces pointer integrity through strong unique modifiers; and 
RSTI~\cite{ismail2024enforcing}, which systematically generates unique modifiers per pointer based on scope and type.
Beyond research prototypes, PAC has been widely adopted across Arm-based platforms to harden both kernel and user space. Apple’s \emph{arm64e} architecture uses a customized version of PAC to sign return addresses, function pointers, and C++ virtual tables, with compiler and hardware support integrated into iOS, macOS, and their system libraries~\cite{apple_pac_dyld, clang_pac_docs}. Linux and Windows on Arm similarly deploy PAC to protect kernel control-flow structures and sensitive function pointers~\cite{linux_pac_docs, windows_pac_arm64}. The Clang/LLVM toolchain provides first-class support for emitting PAC instructions (\texttt{PACIASP}, \texttt{AUTIASP}, \texttt{BLRAA}) and manages key usage transparently during code generation~\cite{clang_pac_docs}.
Collectively, these systems demonstrate PAC’s versatility as a hardware-assisted substrate for enforcing comprehensive pointer integrity.

\noindent\textbf{Brute-force Attacks on PAC.}
Due to the limited PAC size, an attacker can mount a brute-force attack by exhaustively trying all possible authentication codes (e.g., $2^{16} = 65{,}536$ candidates, reduced further when combined with Memory Tagging Extension (MTE)~\cite{armmte}), as illustrated in Figure~\ref{fig:brute-force}. Such attacks can proceed both conventionally and speculatively.
In a modern processor, instructions, including pointer authentication operations, may execute along mis-speculated paths. If a forged pointer with an incorrect PAC is speculatively used, no architectural fault is raised; instead, the subsequent memory loads fail because authentication has failed. Consequently, the act of speculatively executing pointer authentication and using the resulting pointer can leave detectable microarchitectural effects (for instance, a cache access might occur only if the PAC was correct)~\cite{pacman}.
This creates a potential PAC oracle: a side channel through which an attacker can learn whether a guessed PAC was valid or not, by observing subtle hardware state changes, all without triggering any program-visible error. 
The PACMAN study~\cite{pacman} demonstrated that speculative execution can be exploited to create PAC oracles, enabling systematic brute-forcing of PAC values within a reasonable time (approximately 2.94 minutes on an Apple M1 chip), effectively removing the primary barrier to control-flow hijacking on PAC-protected platforms.

For mitigation, Arm recommends PAC clearing via XPAC immediately after authentication, e.g., AUT; XPAC, which eliminates residual PAC state and prevents speculative dereference of corrupted pointers. 
In later versions, Arm introduces FEAT\_FPACC\_SPEC, making the architectural states not measurably different between failing or passing the authentication~\cite{arm:feat_fpacc}.
However, these fixes mean the forged pointers can be used during transient execution, not protecting the pointers in Spectre attacks~\cite{na2023penetrating}. Attackers can still launch attacks like speculative ROP~\cite{kiriansky2018speculative}.

%% file: 3.threat_model.tex
\section{Threat model}

\begin{table*}[t]
\centering
\begin{threeparttable}
\caption{\diff{Comparison of representative pointer integrity defenses.}}
\label{tab:pointer_comparison}
\renewcommand{\arraystretch}{1.05}
\setlength{\tabcolsep}{6pt}

\begin{tabular}{|
  >{\raggedright\arraybackslash}p{0.14\textwidth} |
  >{\raggedright\arraybackslash}p{0.10\textwidth} |
  >{\centering\arraybackslash}p{0.15\textwidth} |
  >{\centering\arraybackslash}p{0.13\textwidth} |
  >{\centering\arraybackslash}p{0.18\textwidth} |
  >{\centering\arraybackslash}p{0.13\textwidth} |
  }
\hline
\textbf{Category} &
\textbf{Defense} &
\textbf{Brute-force Resilience} &
\textbf{Memory Footprint Overhead} &
\textbf{Required Modification} &
\textbf{Context(Permission) / pointer} \\ \hline

\multirow{2}{*}{Address Randomization} &
ASLR~\cite{shacham2004effectiveness} &
19--28 bits\tnote{a} &
None &
OS &
No \\ \cline{2-6}
& Morpheus~\cite{gallagher2019morpheus} &
60 bits\tnote{a} &
2 bits/word &
Crypto Engine + Churn-Unit + OS &
No \\ \hline

\multirow{5}{*}{External Metadata} &
CHERI~\cite{woodruff2014cheri} &
\multirow{4}{*}{\parbox[c]{0.15\textwidth}{\centering Metadata not accessible in user space.\\ Cannot be brute forced}} &
256 bits/pointer &
\multirow{4}{*}{\parbox[c]{0.18\textwidth}{\centering
\begin{tabular}{@{}l@{}}
Instructions + compiler\\ + memory hierarchy
\end{tabular}}} &
\multirow{4}{*}{\parbox[c]{0.13\textwidth}{\centering
\begin{tabular}{@{}l@{}}
Included in tags
\end{tabular}}} \\ \cline{2-2} \cline{4-4}
& ZeRØ~\cite{ziad2021zero} &
 &
2 bits/word &
 &  \\ \cline{2-2} \cline{4-4}
& PUMP~\cite{dhawan2015architectural} &
 &
word-size bits/word &
 &  \\ \cline{2-2} \cline{4-4}
& Star\cite{gollapudi2023control} &
 &
(2,6) bits/word for (Data, Instructions) &
 &  \\ \cline{2-6}
& CCFI~\cite{mashtizadeh2015ccfi} &
128 bits &
128 bits/pointer &
Compiler\tnote{b} &
80 bits \\ \hline

\multirow{4}{*}{In-pointer Metadata} &
FRP~\cite{phaye2025fully} &
52 bits &
16 bytes/object &
OS + memory hierarchy &
No \\ \cline{2-6}

& PAC~\cite{liljestrand2019pac} &
7--16 bits &
\multirow{3}{*}{\centering None} &
\multirow{3}{*}{\parbox[c]{0.18\textwidth}{\centering
Instructions + Compiler\tnote{c} + \diff{Hardware }Crypto Engine}} &
64 bits \\ \cline{2-3} \cline{6-6}

& C3~\cite{lemay2021cryptographic} &
24 bits &
 &  & Yes \\ \cline{2-3} \cline{6-6}

& \sys (ours) &
64 bits &
 &  & Adjustable\tnote{d} \\ \hline

\end{tabular}

\begin{tablenotes}
\footnotesize
\item[a] This entropy accounts for the whole system's security and is different from per pointer entropy.
\item[b] They use Intel AES-NI engine for encryption.
\item[c] Our work (\sys) uses the same Instructions and Compiler support as Arm PAC. 
\item[d] up to 192 bits. maximum level of security is achieved if we keep context size as large as the unused bits. 
\end{tablenotes}

\end{threeparttable}
\end{table*}

We follow a threat model similar to typical memory-corruption attacks~\cite{mashtizadeh2015ccfi}. The attacker aims to exploit memory-safety vulnerabilities, such as stack buffer overflows and use-after-free bugs, to corrupt pointers and thereby subvert control flow or mount data-oriented attacks. We assume a powerful attacker who, after exploiting such a vulnerability, can perform arbitrary reads of process memory and overwrite any writable memory location, including code pointers, data pointers, and user-space protection metadata such as modifiers. This capability enables both control-flow hijacking and data-oriented manipulation.

We also include pointer forgery during transient execution attacks~\cite{kiriansky2018speculative,na2023penetrating,goktas2020speculative} in our threat model. If pointer integrity is violated during speculation (e.g., Speculative ROP~\cite{kiriansky2018speculative}, or forging a pointer in Spectre attacks), we consider it an attack. However, we focus on pointer integrity protection; other information leakages due to side channels not using pointer forgery are out of scope.

We trust the underlying hardware and operating system kernel, which are responsible for securely generating, managing, and storing the process-wide secret key \(K\), which remains constant during execution and is inaccessible to user-level code.


%% file: 4.Pointer_encryption.tex
\section{\sys Design}
\subsection{Design Goals}

\sys is designed to ensure robust pointer integrity while maintaining practicality and efficiency in both conventional and speculative execution. The key design goals are:

\begin{enumerate}[label=\textbf{G\arabic*:}, leftmargin=1.2cm]
    \item \textbf{Comprehensive Pointer Integrity Coverage.} Protect all pointer types, including both data pointers and code pointers (e.g., return addresses).

    \item \textbf{Zero Metadata Overhead.} Eliminate auxiliary data structures, shadow memory, or tag tables to avoid memory overhead and access latency.

    \item \textbf{High Security Strength.} Provide strong protection against pointer corruption, reuse, and brute-force attacks by maximizing cryptographic entropy within the pointer representation.

    \item \textbf{Low Performance Overhead.} Use lightweight, hardware-assisted encryption to achieve near-PAC runtime overhead, enabling deployment in performance-sensitive systems. 

    \item \textbf{Flexible Hardware Support for Various Security Policies.} Support multiple protection policies and context-binding schemes under user or compiler control, e.g., modifier in PAC, enabling flexible trade-offs between performance, security, and compatibility.

    \item \textbf{Reuse of the Existing Toolchain for Easy Deployment of Defenses} Preserve existing compiler instrumentation, ABI conventions, and runtime interfaces so that PAC-enabled software can run on \sys without modification, ensuring drop-in integration into existing software and toolchains.
\end{enumerate}

\subsection{Design Space Discussion}

A wide range of mechanisms have been developed to defend against memory corruption and control-flow attacks. Table~\ref{tab:pointer_comparison} compares representative designs across four axes: brute-force space, memory footprint overhead, deployment requirements, and context granularity within each domain. We analyze their trade-offs and how they align with our design goals.
Broadly, these mechanisms fall into two primary categories: \emph{address layout randomization} and \emph{metadata augmentation}.

\paragraph{Address Layout Randomization}
Address-randomization defenses~\cite{aslr,gallagher2019morpheus} increase attacker uncertainty by randomizing the placement or representation of code and data objects.  Conventional ASLR is practical and widely deployed because it requires no changes to program binaries or pointer formats, but it provides only probabilistic protection: once layout entropy is disclosed or guessed, leaked pointers can be reused to construct code-reuse or control-flow hijacking attacks. Prior work~\cite{na2023penetrating} further shows that such secret offsets can be inferred through speculative probing attacks~\cite{goktas2020speculative}. Morpheus~\cite{gallagher2019morpheus} strengthens this class of defenses with hardware-supported moving-target defenses and runtime churn. It applies a random displacement to code and data pointers by adding a secret offset to their values, and periodically changes these offsets during execution (e.g., every 50\,ms in the evaluated configuration) so leaked or brute-forced information becomes stale. This stronger protection, however, requires substantial architectural support, including 2-bit runtime domain tags per 64-bit word, additional tag storage/cache structures, and specialized hardware for churning, pointer translation, and attack detection.
Thus, while address randomization is an effective baseline defense, conventional ASLR does not provide pointer integrity (\textbf{G1}), and stronger variants such as Morpheus achieve higher security only with nontrivial hardware and metadata complexity, falling short of our goals for comprehensive coverage (\textbf{G1}) and zero metadata overhead~(\textbf{G2}).

\paragraph{Metadata Augmentation}
A second family of techniques strengthens pointer integrity by associating pointers with auxiliary integrity or capability metadata~\cite{mashtizadeh2015ccfi, ziad2021zero, woodruff2014cheri, liljestrand2019pac, gollapudi2023control, sasaki2019practical, lemay2021cryptographic, cowan2003pointguard, phaye2025fully}.
Depending on where metadata is maintained, these defenses can be grouped into two categories:

\noindent \textbf{Protection with external auxiliary metadata structures.} Systems such as CCFI~\cite{mashtizadeh2015ccfi}, ZeRØ~\cite{ziad2021zero},
Star~\cite{gollapudi2023control}, and capability-based architectures like CHERI~\cite{woodruff2014cheri} associate pointers with auxiliary metadata structures or tagged memory that encode authentication codes, bounds, or permissions. Such schemes provide robust protection, comprehensive coverage (\textbf{G1}), and strong security guarantees (\textbf{G3}) by maintaining precise, per-pointer metadata. However, they incur nontrivial memory, lookup, and synchronization overheads. Their reliance on external structures complicates hardware design and violates the \textbf{zero-metadata} principle (\textbf{G2}), making them less practical for lightweight or commodity environments.

\noindent \textbf{In-Place Protection.}  Several works, such as Arm Pointer Authentication (PAC)~\cite{arm16} and C3~\cite{lemay2021cryptographic}, embed integrity metadata directly into the pointer representation by repurposing unused high-order address bits. Such approaches have zero metadata overhead and are adopted by industry~\cite{arm16}. However, subsequent analyses have demonstrated that such schemes are vulnerable to brute-force or collision-based attacks: PAC is compromised by PACMAN~\cite{pacman}, and C3 is bypassed by Na \textit{et al.}~\cite{na2023penetrating}. The underlying reason is that the number of authentication bits embedded within the pointer is limited; PAC uses only 11–15 bits, and C3 employs a 24-bit cipher, significantly constraining the available integrity space.
\diff{ Similarly, FRP~\cite{phaye2025fully} encodes 52 bits of the pointer, including the unused high order bits; however, its encoding/decoding uses table (map) lookup instead of cryptography, introducing extra indirection and performance overhead.  Without new instructions, it relies on malloc to manage the map and supports only heap objects.
}

\textbf{\em Pointer Authentication.}
Current in-place schemes, such as Arm PAC~\cite{liljestrand2019pac}, retain most pointer bits for the raw address and dedicate only a small fraction of high-order unused bits to the authentication code. This design choice stems from two considerations by Arm~\cite{qualcomm_pac_v7}: (i) preserving the full address value allows pointers to participate in branch prediction without authentication, and (ii) maintaining meaningful address values simplifies software debugging and crash analysis. These choices, however, dramatically reduce the number of bits available for authentication—typically to fewer than two dozen bits (16 bits on the Apple M1~\cite{pacman}), depending on the virtual address space layout. As demonstrated by the PACMAN attack~\cite{pacman}, such limited entropy enables practical brute-force and oracle-based attacks that can recover valid authentication codes within a few minutes on commodity hardware.


One benefit of keeping the raw pointer value is that it leaves room for micro-architectural optimizations, such as using the pointer speculatively before the authentication completes, hiding the authentication latency. 
e.g., Arm also provides the fused instruction for authentication and load (\texttt{LDRAA} and \texttt{LDRAB}) and authentication and return (\texttt{RETAA} and \texttt{RETAB}).
However, in practice, as shown in the PACMAN~\cite{pacman} attack, the authentication result impacts the execution result of the follow-up instruction consuming the pointer, 
indicating limited overlap between the load or return operation and authentication. Otherwise, if a data pointer is used speculatively before authentication, the TLB must be updated regardless of the authentication result, since it lies on the critical path. This prevents the PACMAN attack from succeeding on data pointers. Additionally, PARTS~\cite{liljestrand2019pac} uses additional instructions to emulate the 4-cycle authentication delay for end-to-end performance evaluation, showing reasonable performance without speculation. We corroborate this with Apple M1 measurements, where PAC-protected data pointer accesses in a pointer-chasing scenario incur overhead comparable to encryption where no raw pointer bit exists.
 
Arm also provides the XPAC instruction set (e.g., \texttt{XPACI}, \texttt{XPACD}, and \texttt{XPACLRI}) to strip the PAC from a pointer and recover its original address~\cite{armxpac}. However, XPAC operations are typically invoked only in specialized contexts where security is not a concern, such as specific pointer arithmetic, low-level runtime code, or debugging, and are rarely executed in normal application paths.

\textbf{\em This paper: Full Pointer Encryption.}
These findings suggest that preserving full raw address bits in authenticated pointers may not be necessary in practice while severely constraining available entropy. 
\sys leverages this insight to address the core limitations of prior designs. By repurposing all 64 bits of the pointer for cryptographic protection, \sys achieves comprehensive coverage across all pointer types (\textbf{G1}), eliminates external metadata to meet the zero-overhead requirement (\textbf{G2}), and maximizes entropy to provide strong cryptographic protection against brute-force and reuse attacks (\textbf{G3}). This full-pointer encryption approach removes the entropy bottleneck inherent to truncated MACs and transparently restores a valid address upon dereference, providing robust integrity and confidentiality without additional memory or hardware state.

However, full pointer encryption still faces challenges in finding a suitable cipher, conducting a security evaluation, and providing debugging support when needed.

\begin{table*}[t]
  \centering
  \caption{Lightweight cipher candidates for \sys. \diff{Area values are representative gate equivalents (GE) as reported in the original cipher publications or widely cited hardware implementation studies for compact round-based implementations. 
  C}ycle counts correspond to typical round-based hardware implementations. }
  \setlength{\tabcolsep}{2pt} 
  \label{tab:lit-cipher-comparison}
  \begin{threeparttable}
  \begin{tabular}{lcccccccl@{}}
    \toprule
    \textbf{Cipher} &
    \makecell{\textbf{Block Size}\\\textbf{(bits)}} &
    \makecell{\textbf{key}\\\textbf{(bits)}} &
    \makecell{\textbf{tweak}\\\textbf{(bits)}} &
    \makecell{\textbf{Rounds}\\\textbf{}} &
    \makecell{\textbf{Area}\\\textbf{(GE)}} &
    \makecell{\textbf{Latency}\\\textbf{(cycles or ps)}} &
    \textbf{Notes relevant for \sys} \\
    \midrule
    \multicolumn{8}{c}{\emph{Early lightweight / \diff{area-optimized} ciphers (not latency-optimized, area = $\sim$ listed area cost X no. of rounds)}} \\
    \midrule
    KATAN/KTANTAN~\cite{canniere2009katan}     & 64  & 80-bit     & — & 254 & \hspace{-22pt}$\sim$3200/$\sim$3000\hspace{-22pt}      & $\sim$254 cyc & Bit-serial, ultra-low-area, long latency. \\
    LED-64~\cite{leander2011led}         & 64  & 64/128-bit & — & 32/48 & $\sim$1200     & 32 cyc       & Compact SPN; unrolled path too deep. \\
    Piccolo-80/128~\cite{biryukov2011piccolo}     & 64  & 80/128-bit     & — & 25/31  & 683/758             & 432/528 cyc      & Very small area; serialized 4-bit datapath. \\
    TWINE-80~\cite{suzaki2012twine}       & 64  & 80-bit     & — & 36  & 1503            & 36 cyc       & Round count too large for 1-cycle use. \\
    KLEIN-64/80/96~\cite{gong2012klein}       & 64  & 64/80/96-bit     & — & 12/16/20  & 1981/2097/2213            & 105/107/109 cyc     & Compact SPN; Not optimized for unrolling. \\
    PRESENT-80~\cite{bogdanov2007present}     & 64  & 80-bit     & — & 31  & 1570            & 32  cyc      & ISO lightweight cipher; unrolled area large. \\
    SIMON64/128~\cite{beaulieu2015simon}    & 64  & 128-bit    & — & 44  & $\sim$1200--1400 & 44  cyc      & Hardware-oriented Feistel; too many rounds. \\
    ASCON (perm.)~\cite{dobraunig2021ascon}  & \hspace{-20pt}320 (rate 64/128)\hspace{-22pt} & 128-bit & — & 6/8 perms & $27280$ & 6 cyc & AEAD permutation; not a 64-bit block cipher. \\
    \midrule
    \multicolumn{8}{c}{\emph{Low-latency ciphers (suitable candidates for pointer authentication, e: encryption, d: decryption)}} \\
    \midrule
    K-Cipher~\cite{kounavis2020kcipher}\tnote{a}       & var.& param.     & — & 10--14       & 42552(e) & 767ps/2--3 cyc & High-throughput low-latency via pipelining. \\
    BipBip~\cite{belkheyar2022bipbip}\tnote{a}         & 24  & 128-bit    & 128-bit & 7 (stages)   & 5741(d) & 622(d) ps/3 cyc & Depth-3 pipeline; block size too small. \\    
    QARMA-64~\cite{avanzi2017qarma} \tnote{b}      & 64  & 128-bit & 64-bit & 11       & 22131(e/d) & 553 ps & Critical path larger than PRINCE-family. \\
    PRINCE~\cite{borghoff2012prince}\tnote{b}         & 64  & 128-bit            & — & 12 (5+mid+5) & 13468(e/d) & 401 ps  & Designed for single-cycle unrolled latency. \\
    PRINCEv2~\cite{princev2}\tnote{b}       & 64  & 128-bit            & — & 12            & 14181(e/d) & 404 ps & Best latency/area trade-off for \sys. \\
    \bottomrule
  \end{tabular}
  \begin{tablenotes}
    \footnotesize
    \item[a] The numbers for K-Cipher and BipBip are taken from the original publications \cite{kounavis2020kcipher, belkheyar2022bipbip}. Note that we normalized the area result for K-Cipher as GE based on Intel 10\,nm library characteristics provided in \cite{8268472}, it is originally reported as 1875$\mu\text{m}^2$.We report decryption-only numbers for BipBip as highlighted also in the original work.
    \item[b] The results for QARMA, PRINCE, and PRINCEv2 are taken from the original PRINCEv2 work \cite{princev2}, as all ciphers were implemented in NanGate 15\,nm technology setting, which provided us with a fairer comparison. Note that, according to our calculations based on NanGate 15\,nm Open Cell Library Databook, 13468 GE translates to 3391$\mu\text{m}^2$ for PRINCE and 14181 GE translates to 3570$\mu\text{m}^2$ for PRINCEv2. We report only the results for e/d shared datapath architectures in the table, encryption-only results are slightly smaller/faster than these.
  \end{tablenotes}
  \end{threeparttable}
\end{table*}

\begin{table}[t]
  \centering
  \caption{Cipher area and timing (post-implementation).}
  \label{tab:cipher-area-timing}
  \setlength{\tabcolsep}{6pt}
  \begin{tabular}{lcccc}
    \toprule
    \textbf{Cipher} & \textbf{LUTs} & \textbf{FFs} &
    \makecell{\textbf{Latency}\\\textbf{(cycles)}} &
    \makecell{\textbf{Fmax}\\\textbf{(MHz)}} \\
    \midrule
    QARMA-vhd~\cite{memsec_hdl_crypto}    & 1670 & 65 & 2 & 67 \\
    PRINCE-vhd~\cite{memsec_hdl_crypto} & 1233 & 65 & 2 & 84 \\
    PRINCE-vhd+mod & 1250 & 65 & 2 & 72 \\
    QARMA-unrolled-verilog    & 1794 & 0 & 1 & 40 \\
    PRINCE-unrolled-verilog~\cite{borghoff2012prince} & 1378 & 0 & 1 & 41 \\
    PRINCEv2-unrolled-verilog& 1378 & 0 & 1 & 44 \\
    PRINCEv2-unrolled-verilog+mod & 1522 & 0 & 1 & 42 \\
    \bottomrule
  \end{tabular}
\end{table}

\subsection{Choice of Cipher}

A central requirement in \sys's design is to achieve strong cryptographic protection with minimal performance impact (\textbf{G4}). 
As pointer unsealing and sealing occur frequently along critical execution paths, the encryption primitive must be ``light'' enough in terms of operation latency, area, and power. 
Since \sys aims to protect the pointers in 64-bit machines without adding additional memory footprint (\textbf{G2}), we consider ciphers that operate on a message size of 64-bit or less. 
Over nearly two decades, the cryptographic community has proposed numerous 64-bit lightweight block ciphers targeting low-area or low-power implementations, and
Table~\ref{tab:lit-cipher-comparison} summarizes the intended design goals, security levels (key sizes), block sizes, and expected latency and area characteristics of the widely-used ciphers among these proposals.

Early proposals such as KATAN and KTANTAN~\cite{canniere2009katan}, LED~\cite{leander2011led}, Piccolo~\cite{biryukov2011piccolo}, TWINE~\cite{suzaki2012twine}, and KLEIN~\cite{gong2012klein} demonstrate this direction of design clearly. Prominent lightweight block ciphers such as the ISO-standard \textsc{PRESENT}~\cite{bogdanov2007present}, the NIST lightweight authenticated encryption standard \textsc{ASCON}~\cite{dobraunig2021ascon}, and the NSA's SIMON family~\cite{beaulieu2015simon} follow similar principles. However, while these designs offer small round-based hardware footprints, their large number of rounds renders them impractical for unrolled implementations, where the full datapath is required in a single cycle. Their corresponding critical paths are too long for tightly integrated pointer authentication on the processor's load–use path (see Table~\ref{tab:lit-cipher-comparison}).

For \sys, which performs a pointer integrity check (i.e., decryption) on all protected pointers before use, the decryption latency directly influences the critical path of the program. 
This requirement moves the design space away from classical lightweight designs and toward \emph{low-latency} block ciphers explicitly engineered for unrolled hardware implementations. Examples include  K-Cipher \cite{kounavis2020kcipher}, BipBip \cite{belkheyar2022bipbip}, QARMA \cite{avanzi2017qarma}, \textsc{PRINCE} \cite{borghoff2012prince}, and \textsc{PRINCEv2} \cite{princev2}, all of which target ultra-short critical paths. 
Designs such as K-Cipher and BipBip offer deeply pipelined low-latency modes (e.g., depth-3 pipeline at around 4-4.5\,GHz in 10\,nm technology for both ciphers \cite{kounavis2020kcipher, belkheyar2022bipbip}), which are useful for high-throughput applications. 
However, the security level of BipBip is aligned for encrypting 24-bit blocks in every encryption, which is not suitable for \sys's encryption of 64-bit blocks requirement for pointer protection \cite{belkheyar2022bipbip}. Furthermore, K-Cipher's parameterizable structure and few rounds do not provide the same level of publicly scrutinized provable security margins as PRINCE-family ciphers \cite{mahzoun2022kcipherattack}.
In terms of area, cost, and security level trade-off, PRINCE-family ciphers still perform better.
This is best reflected in the results table presented in the original PRINCEv2 cipher manuscript (Table 6, \cite{princev2}), where the authors provide the latency and area results for both PRINCE and PRINCEv2 in NanGate 15\,nm Open Cell Library (see Table~\ref{tab:lit-cipher-comparison} for details): the results presented here are comparable to K-Cipher; especially when the area cost of unrolling is taken into account. In the best latency setting (401 ps) for PRINCE, the area consumption in $\mu\text{m}^2$ is more than K-Cipher (when the numbers in Table 6 in \cite{princev2} are translated back to $\mu\text{m}^2$ according to the NanGate 15\,nm Open Cell Library Databook); however, the area cost for PRINCE with a latency of 600 ps is comparable to K-Cipher with the same calculation.

In contrast, \textsc{QARMA}, \textsc{PRINCE} and \textsc{PRINCEv2} are explicitly optimized for one-cycle unrolled datapaths under reasonable area budgets. QARMA~\cite{avanzi2023qarmav2} offers a 64-bit tweak, which can encode context for pointer protection; however, it demands significantly larger area and suffers from longer logic depth in unrolled mode than PRINCE. The PRINCE-family provides strong built-in support for ``encryption = decryption'' with minimal overhead and constant-time structure design.  PRINCEv2 offers a more suitable balance of latency, area, and security margins for pointer integrity. However, without a tweak structure like QARMA, a scheme is needed to incorporate the context for pointer protection.

\diff{As the values in Table~\ref{tab:lit-cipher-comparison} are collected from each individual paper and are not normalized across works, it only provides an approximate comparison of design complexity. }
For the ciphers considered in \sys, we implemented representative designs on an AMD/Xilinx VCU118 FPGA platform with a Virtex UltraScale+ device and report post-implementation area and timing in Table~\ref{tab:cipher-area-timing}. We use publicly available HDL implementations as baselines: the QARMA Verilog implementation is adapted from the corresponding VHDL design, while the PRINCE and PRINCEv2 implementations are adapted from~\cite{borghoff2012prince}.
The results show that the unrolled PRINCE-family implementations are the best fit for \sys's latency and area requirements. Relative to unrolled QARMA, PRINCEv2 requires fewer LUTs and achieves a slightly higher maximum frequency, while maintaining single-cycle latency. The \texttt{+mod} variants add the XOR logic needed to inject a design-specific tweak into the PRINCE datapath. Although this modification increases area relative to the unmodified PRINCE-family designs, PRINCEv2+\texttt{mod} remains substantially smaller and faster than unrolled QARMA. We therefore use \textsc{PRINCEv2+mod} as the pointer-sealing cipher in \sys, as it provides single-cycle sealing with low area overhead and directly supports the tweak integration required by our design.

\subsection{ISA and Programming Interface Design}
\sys aims to introduce strong pointer protection, while trying to reuse existing software infrastructures rather than redesigning them. Thus, \sys tries to realize full-pointer encryption in a way that preserves PAC's practical advantages (compact representation, compiler and ABI compatibility, context binding, and in-line hardware operation) while fundamentally elevating its security guarantees.
In PAC, the use of context (modifier) is essential to security: it cryptographically ties each pointer to its creation environment, such as a stack frame, privilege level, or protection domain, so that even if a pointer is leaked or copied, it cannot be validly reused in another context (cross-domain pointer reuse attack). 
\sys keeps a similar protection mechanism and is compatible with existing compiler instrumentation, while establishing the foundation for the subsequent goals of performance efficiency (\textbf{G4}), configurability (\textbf{G5}), and seamless deployability (\textbf{G6}).

In the pointer encryption design, \sys treats every pointer as an encrypted capability. The encrypted pointer will be similar PAC-protected pointers in Arm.  Each pointer is encrypted with a  context when generated or stored and decrypted only at the point of dereference, ensuring integrity and confidentiality throughout its lifetime.

To achieve \textbf{G6} (\emph{Seamless Deployability}), \sys integrates full-pointer encryption into the instruction set architecture with minimal disruption to existing software and toolchains. The design extends the ISA with a small set of new instructions that provide efficient hardware interfaces for pointer encryption and decryption. The system manager or operating system configures these settings using the \texttt{SET\_KEY} and \texttt{SET\_M\_SIZE} instructions. Here, \texttt{SET\_KEY} assigns the 128-bit encryption key to the current security domain, while \texttt{SET\_M\_SIZE} specifies the configurations of the modifier components. 

Applications can use the \texttt{PTR\_SEAL} and \texttt{PTR\_UNSEAL} instructions, which provide an efficient interface for encrypting and decrypting pointers. Their semantics closely mirror PAC's \texttt{PAC*} and \texttt{AUT*} instructions, allowing direct reuse of existing compiler instrumentation, APIs, and ABI conventions without modification. Table~\ref{tab:isa} summarizes their functionality.
Also, for debugging purposes, the protection can be turned off.

\begin{table}[t]
\centering
\caption{Pointer Encryption ISA Extensions}
\label{tab:isa}
\setlength{\tabcolsep}{4pt} 
\begin{tabular}{|l|p{4.8cm}|}
\hline
\textbf{Instruction} & \textbf{Description} \\ \hline
\texttt{SET\_KEY(K1, K2)} & Assigns the 128-bit encryption key by concatenating $K_1$ and $K_2$. \\ \hline
\texttt{SET\_M\_SIZE(conf)} & Sets the configurations for modifier use. \\ \hline
\texttt{PTR\_SEAL(ptr, mod)} & Encrypts a 64-bit pointer using the system-managed secret key and an optional modifier. Invoked when a pointer is created or stored. \\ \hline
\texttt{PTR\_UNSEAL(ptr, mod)} & Decrypts and validates an encrypted pointer before dereference. \\ \hline
\end{tabular}
\end{table}

\begin{figure}[t]
  \centering
  \subfloat[Encryption\label{fig:enc}]{
    \includegraphics[width=0.45\linewidth]{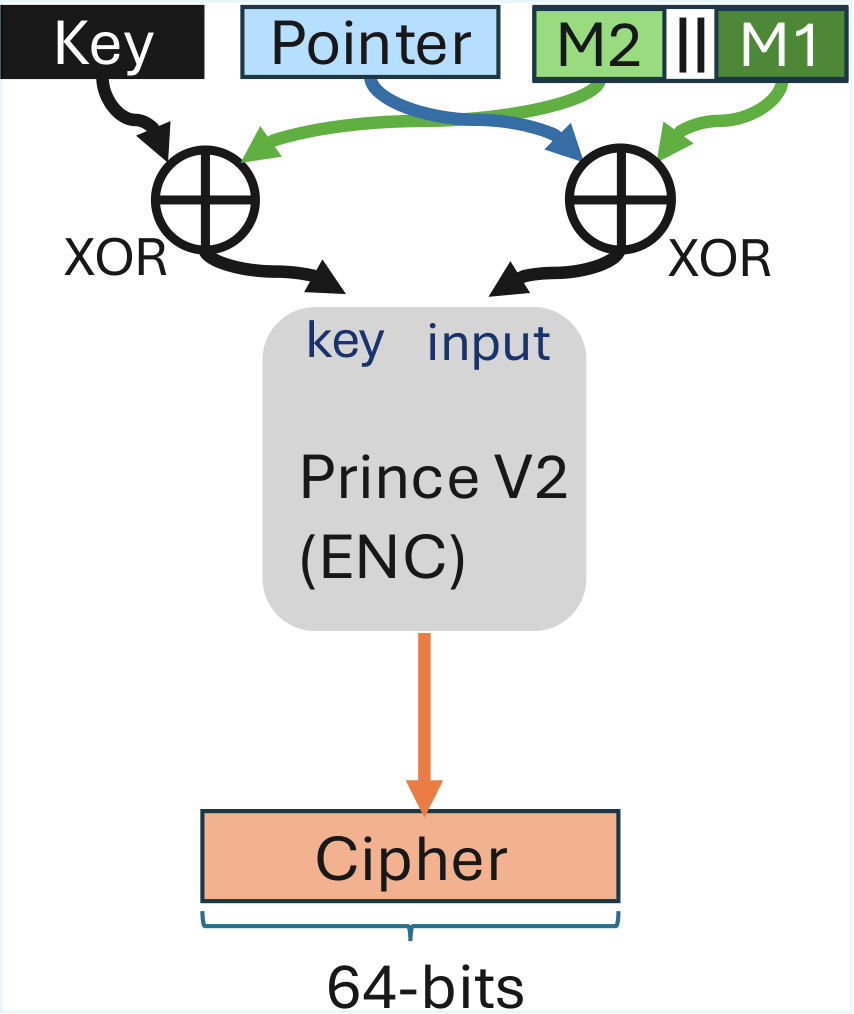}
  }\hfill
  \subfloat[Decryption\label{fig:dec}]{
    \includegraphics[width=0.45\linewidth]{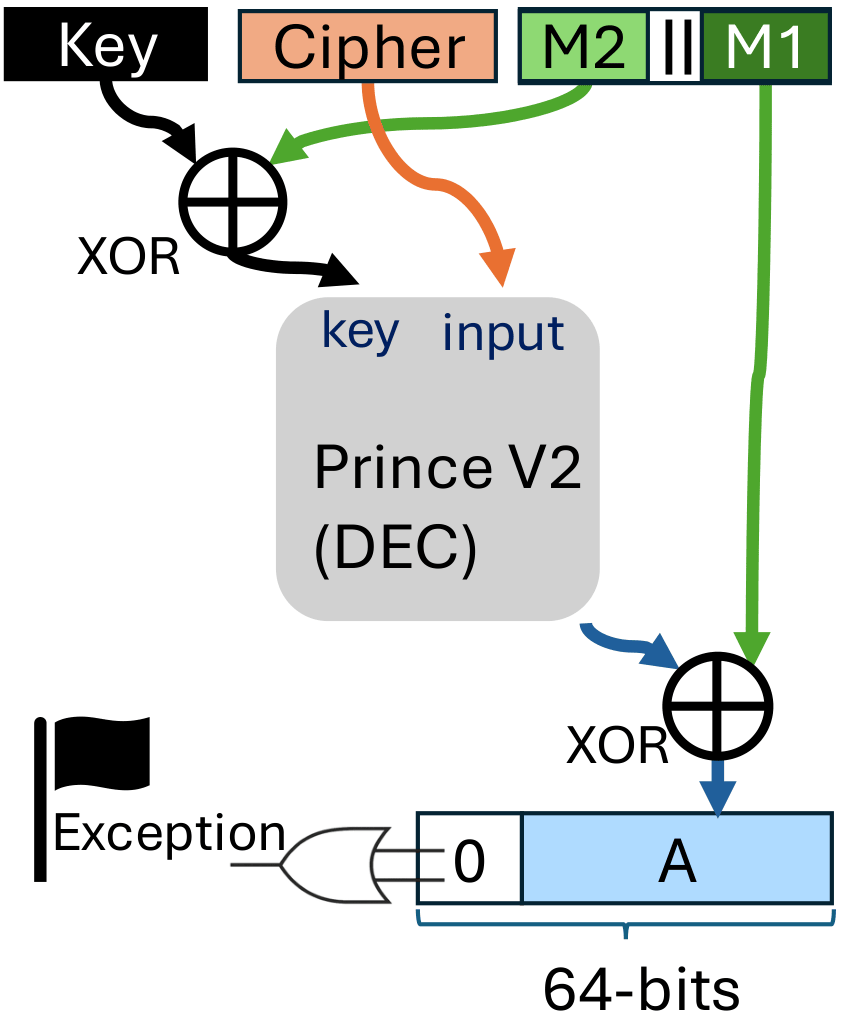}
  }
  \caption{Pointer encryption and decryption design with PRINCEv2. Key has 128 bits, input/output 64 bits, and M1 and M2 sizes are user defined. \diff{After decryption completes, we expect the unused bits of the resulting plaintext to be all zeros, otherwise an exception flag will be raised.}}
  \label{fig:encdec}
\end{figure}

\subsection{Modifier Design}
To address \textbf{G5} (\emph{Configurable Protection Modes}) and \textbf{G6} (\emph{Reusing Existing PAC Toolchain}),
\sys introduces a flexible modifier design that enables fine-grained control over how pointers are cryptographically bound to their execution context. A key principle in pointer encryption is that each \emph{encrypted pointer} must depend on three components:  
(i) a secret key managed by the system and isolated from software control,  
(ii) the pointer value itself, and  
(iii) a context-dependent modifier that ties the encrypted pointer to specific execution conditions, preventing its reuse in unauthorized domains.

\subsubsection{Design Challenges}
As shown in Table~\ref{tab:lit-cipher-comparison} and \ref{tab:cipher-area-timing}, PRINCEv2 has the best latency among the 64-bit block ciphers. Ciphers with tweaks incur higher latency and area to process the tweaks. 
Can we co-design system and cipher to reduce the need for tweaks and lower the protection latency?

\subsubsection{Design Approach.}
PRINCE encryption has plaintext message and key as the input. So the design options are to mix the modifier into the plaintext or key.
Specifically, we define the encryption and decryption as:
\[
\mathit{cipher} = \mathsf{seal}(k, \mathit{ptr},m)=\mathsf{Enc}_{k \oplus m_2}(\mathit{plain} \oplus m_1)
\]
\[
\mathit{plain ptr} = \mathsf{unseal}(k, \mathit{cipher},m)=\mathsf{Dec}_{k \oplus m_2}(\mathit{cipher}) \oplus m_1
\]
Here, $m=m_1 || m_2$ represents modifier components derived from the execution context (e.g., privilege level, address-space identifier, or control-flow epoch), as illustrated in Figure~\ref{fig:encdec}.
The lengths of $m_1$ and $m_2$ can be configured by the system manager using the \texttt{SET\_M\_SIZE($m_1$, $m_2$)} instruction, as described in Table~\ref{tab:isa}. 
We modified the PRINCE and PRINCEv2 source codes to incorporate the XOR logic for $m_1$ and $m_2$ and compared their maximum frequency and area with QARMA, as shown in \texttt{+mod} rows in  Table~\ref{tab:cipher-area-timing}, PRINCEv2 continues to deliver the best performance.

\subsubsection{Modifier Design Security Analysis}
\label{sec:mod_design}
~

\paragraph{Attacker Assumptions}
Based on our threat model, the attacker may exploit software vulnerabilities to overwrite {\em protected pointer} and {\em modifier} values at runtime. However, the attacker does \emph{not} have access to or control the key \(K\) for the victim process.  We also assume the attacker may have access to a set of observed tuples
$\mathcal{T} = \{(m_i, p_i, c_i)\mid c_i = \mathsf{Enc}_{K}(p_i)$ potentially by observing the victim program. The goal of the attacker is to forge an encrypted pointer that dereferences to a target memory location $p_a$, thereby violating pointer integrity.
That means a successful attacker 
can create a valid tuple
$(m_a, p_a, c_a)$
such that \(c_a\) decrypts correctly under modifier \(m_a\) to an attacker-chosen pointer \(p_a\), i.e.,
$p_a = \mathsf{Dec}_{K \oplus m_{2,a}}(c_a) \oplus m_{1,a}.$

\paragraph{Potential Target Bit Flip Attacks with $m_1$}
~

If $m_1$ overlaps with the pointer $ptr$, and $X$ denote the bitmask for the overlapping bits.
Then the attacker can construct $(m_{1,a} = m_{1,v} \oplus X, p_a = p_v\oplus X, c_a = c_v)$ where  $(m_v, p_v, c_v)$ is a valid tuple. As a result, the attacker can overwrite the modifier to flip certain bits in the pointer if $X$ is not zero.

To avoid this ambiguity, \(m_1\) should be placed only in pointer bits that do not affect address generation. On most 64-bit architectures, only the lower \(A\) bits are used as virtual-address bits, e.g., \(A=48\) for a 48-bit virtual address space; the remaining \(64-A\) high-order bits are unused or sign-extension bits. In addition, if pointers are word-aligned, the two least-significant bits do not affect the addressed word. These bits can therefore be safely used for \(m_1\), yielding
\[
    |m_1|_{\max} = 64 - A + 2.
\]


\paragraph{Potential Key-Collision Attacks with $m_2$}
\diff{
Assume the operating system assigns the victim process a secret key $k_v$ and the attacker process a key $k_a$. A potential concern is whether an attacker could exploit the $m_2$ modifier to induce a key collision across protection domains.
Specifically, suppose the attacker attempts to construct a relation of the form
$k_a = k_v \oplus m_{2,d}$ where
$m_{2,d}$ denotes a chosen modifier difference. If such a relation were achievable, the attacker could attempt to craft a tuple
$(m_{2,a} = m_{2,d} \oplus m_{2,v}, \; p_a = p_v, \; c_a = \mathsf{Enc}_{k_a}(p_v)),$
and inject it into the victim domain, thereby forging a valid pointer.
We prevent such a collision with our key assignment.}

\paragraph{\diff{Key 
Assignment}}
\diff{
To prevent cross-domain key-collision attacks induced by adversarial choices of $m_2$, the system enforces that the \emph{unaffected} portion of the domain key is unique across security domains. Concretely, if two domains share the same $128-|m_2|$ unaffected key bits, then an attacker could choose $m_2$ values that cause their derived key to match the victim's derived key. Ensuring uniqueness of the unaffected portion eliminates this possibility.
This policy also bounds the number of simultaneously supported security domains: since only $128-|m_2|$ bits are reserved for domain separation, the maximum number of unique domains is at most $2^{128-|m_2|}$. 
}

\diff{\paragraph{Proof of Security}
~

We define $\mathrm{Adv}^{}_{LIPPEN}(\mathcal{A})$ denote the advantage of adversary~$\mathcal{A}$ in forging an encrypted pointer $c_a$ for chosen $(m_a, p_a)$ for LIPPEN knowing a set of tuples $\mathcal{T}$, and $\mathrm{Adv}^{}_{E}(\mathcal{B})$ being
    the advantage of adversary $\mathcal{B}$ in distinguishing
    the block cipher $E_K(\cdot)$ from a uniformly random permutation $P$ even with oracle access to both $EncE_K(\cdot)$ and $DecE_K(\cdot)$.
For \sys, the implementation ensures that $|m_1|$ uses only unused pointer bits, and the $128-|m_2|$ key bits are never shared between domains.

\begin{theorem}
For any probabilistic polynomial-time (PPT) adversary $\mathcal{A}$ knowing $\mathcal{T}$ running in time $t$, there exists a PPT adversary $\mathcal{B}$ such that

\[
\mathrm{Adv}^{}_{\sys}(\mathcal{A})
\;\le\;
\mathrm{Adv}^{}_{E}(\mathcal{B})
\;+\;
\varepsilon(q).
\]

\end{theorem}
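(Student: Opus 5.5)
The plan is a standard reduction with a game hop. Adversary $\mathcal{B}$ is given oracle access to an encryption/decryption pair $(\mathcal{O}_{\mathrm{enc}},\mathcal{O}_{\mathrm{dec}})$, which is either $(E_K,E_K^{-1})$ for a uniformly random key $K$ or $(P,P^{-1})$ for a uniformly random permutation $P$. $\mathcal{B}$ runs $\mathcal{A}$ internally and simulates the \sys\ environment for it.

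\textbf{Simulating the seal/unseal view.} Because the implementation restricts $m_1$ to the $64-A+2$ unused pointer bits, the map $(p,m_1)\mapsto p\oplus m_1$ is injective on valid (address, modifier) pairs. This removes the bit-flip ambiguity from the Modifier Design analysis. Likewise, the $m_2$ perturbation of the key can be absorbed: for a fixed victim domain, each value of $m_2$ selects a related key $K\oplus m_2$. The key assignment rule guarantees that no other domain's key coincides with any of these, which rules out the cross-domain collision.

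I would therefore model the victim's keyed family $\{E_{K\oplus m_2}\}_{m_2}$ as the primitive $E$ being reduced to. Concretely, $E$ is the map $(m_2,x)\mapsto E_{K\oplus m_2}(x)$, and $\mathrm{Adv}_E$ is read in the related-key sense over the $2^{|m_2|}$ offsets. In the simplest configuration, $|m_2|=0$, this is just ordinary strong-PRP security. $\mathcal{B}$ answers every tuple in $\mathcal{T}$ and every seal or unseal query of $\mathcal{A}$ by computing $c=\mathcal{O}_{\mathrm{enc}}(p\oplus m_1)$ or $p=\mathcal{O}_{\mathrm{dec}}(c)\oplus m_1$. It also checks the zero-unused-bits condition that the hardware enforces. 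When $\mathcal{A}$ outputs a forgery $(m_a,p_a,c_a)$ that does not come from a previous query, $\mathcal{B}$ makes one more call $\mathcal{O}_{\mathrm{dec}}(c_a)$. It outputs $1$ (real) exactly when the result equals $p_a\oplus m_{1,a}$.

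\textbf{Bounding the ideal world.} In the real world, $\mathcal{B}$'s simulation is perfect, so $\Pr[\mathcal{B}\Rightarrow1]=\mathrm{Adv}_{\sys}(\mathcal{A})$. In the ideal world, $\mathcal{A}$ has made at most $q$ queries. The value $P^{-1}(c_a)$ at a fresh point $c_a$ is then uniform over at least $2^{64}-q$ values, so the forgery check passes with probability at most $1/(2^{64}-q)$. Subtracting the two probabilities gives the theorem with
\[
\varepsilon(q)=\frac{1}{2^{64}-q}\le\frac{2}{2^{64}}\quad(q\le 2^{63}).
\]
If I instead prefer to swap a random permutation for a random function, I would add a switching-lemma term $q^2/2^{65}$. $\mathcal{B}$ runs in time $t$ plus $O(q)$ XORs, so it is PPT.

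\textbf{Main obstacle.} I expect the hardest step to be justifying the $m_2$ key-XOR, since PRINCEv2 is not claimed to be related-key secure. My first attempt would be to fix $m_2$ per context and argue separately for each value. A forgery under $m_{2,a}$ whose key $K\oplus m_{2,a}$ the adversary has never seen is a fresh uniform permutation in the ideal world, which contributes the same $1/(2^{64}-q)$. Any structural relation between keys would then be charged to $\mathrm{Adv}_E$ as a related-key distinguisher, under an explicit assumption stating that $E$ is secure against XOR-related keys confined to the $|m_2|$ positions. The uniqueness of the remaining $128-|m_2|$ key bits ensures that different domains never land in the same related-key class.
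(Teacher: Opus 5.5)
Your proposal is correct and uses essentially the same reduction as the paper: $\mathcal{B}$ answers $\mathcal{A}$'s queries through its own oracle on input $p\oplus m_1$ under key $k_v\oplus m_2$, then outputs ``real'' exactly when $\mathcal{A}$'s forgery $(m_a,p_a,c_a)$ verifies against that oracle. Where you differ, your version is the sounder one. You check the forgery once and bound the ideal world by $1/(2^{64}-q)$ from the freshness of $c_a$, with $q$ counting $\mathcal{A}$'s queries; the paper instead re-queries $q$ times and takes a majority to claim $\varepsilon(q)=2^{-64q/2}$, but repeating a query to a fixed permutation returns the same answer, so that figure does not follow. You also state explicitly the related-key assumption on $E_{K\oplus m_2}$, which the paper uses silently when it hands its oracle the key $k_v\oplus m_{2,i}$ and which is vacuous in the implemented $|m_2|=0$ setting.
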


\begin{proof}
$\mathcal{B}$ is given oracle access to either a real block cipher $E(\cdot)$ or a random permutation $P$ for both encryption and decryption. $
\mathcal{B}$ runs $\mathcal{A}$ by simulating the protocol using its oracle in place of $E(\cdot)$ as follows.

\begin{enumerate}
  \item Answer all of $\mathcal{A}$'s queries to build $\mathcal{T}$ by querying the oracle, i.e., encryption or decryption to obtain $c_i$ with key $k_v\oplus m_{2,i}$ and plaintext $p_i\oplus m_{1,i}$.
  \item For $(m_a, p_a)$, run $\mathcal{A}$ to forge pointer $c_a$. $\mathcal{B}$ queries the oracle to encrypt $plain = p_a\oplus m_{1,a}$ with key $k_v\oplus m_{2,a}$. $\mathcal{B}$ repeats for $q$ times and if more than $q/2$ encryption of $plain$ matches $c_a$, $\mathcal{B}$ returns that a real block cipher $E(\cdot)$ is behind the oracle; otherwise, $\mathcal{B}$ returns that a random permutation $P$ is behind the oracle. 
\end{enumerate}
If the oracle is for a real block cipher, then the probability of $\mathcal{B}$ returning the right value (i.e., $\mathrm{Adv}^{}_{E}(\mathcal{B})$) is the probability $\mathcal{A}$ returning the right $c_a$ more than half of the time (i.e., $\mathrm{Adv}^{}_{\sys}(\mathcal{A})$). If the oracle is for a random permutation $P$, then $\mathcal{B}$ will return the wrong result if $c_a$ happens to be the output of the random permutation, which is of probability $2^{-64\times q/2}$. Thus, $\mathrm{Adv}^{}_{E}(\mathcal{B}) \geq \mathrm{Adv}^{}_{LIPPEN}(\mathcal{A})-\varepsilon(q) $
\end{proof}

Thus, breaking \sys is not easier than breaking the underlying block cipher. 
The security of the scheme therefore reduces to the cryptographic strength of the underlying PRINCE-family cipher. Existing cryptanalysis of the PRINCE family primarily targets reduced-round variants or relies on implementation attacks such as differential fault analysis~\cite{jean2014security_prince, song2013dfa_prince,soleimany2015reflection,cryptoeprint:2012/712,10.1007/s00145-020-09345-0,cryptoeprint:2015/245} and has not produced practical attacks on the full-round constructions, indicating no practical key-recovery; the best known attacks require significantly reduced-round variants or complexity close to exhaustive search. More specifically, the best cryptanalysis efforts in PRINCE (base design for PRINCEv2) report that a single key can be recovered with a computational complexity of $2^{125.47}$ using structural linear relations; in the related key setting, the memory complexity is $2^{33}$ and the time complexity $2^{64}$; using the related key boomerang attack, the complexity is $2^{39}$ for both memory and time~\cite{jean2014security_prince}. The authors of PRINCEv2 claim that there is no attack against PRINCEv2 with memory complexity below $2^{47}$ (chosen) plaintext-ciphertext pairs (obtained under the same key) and time-complexity below $2^{112}$~\cite{princev2}, which is in line with the NIST requirement on the security of lightweight ciphers~\cite{nist2018_lwsecurity}. }

\subsubsection{\diff{Integrity with Encryption}}
\diff{
Theorem 1 shows that the probability of an adversary successfully forging a pointer for a target address and modifier is no larger than breaking the cipher. For a random encrypted pointer, after decryption, if the unused pointer bits do not match $m_1$, the engine will detect that the encrypted pointer is not valid  with detection probability of $1-2^{-|m_1|}$, which is at the same security level as PAC. Even if the modifier matches, the pointer will point to a random place in the address space not controlled by the attacker, while with PAC the attacker can forge a pointer to an attacker-chosen address.
}

\subsubsection{Number of Modifier Bits Needed in Practice}
\label{sec:num_mod}

Given that the security will depend on the number of bits in the modifier, here we study the entropy needed in the modifier in practice.

\begin{table}[t]
\centering
\caption{Modifier in representative PAC-based defenses.}
\label{tab:modifier_usage}
\renewcommand{\arraystretch}{1.1}
\setlength{\tabcolsep}{4pt}
\begin{tabular}{|l|p{0.50\columnwidth}|}
\hline
\textbf{Work} & \textbf{Modifier Design} \\ \hline

\textbf{PARTS}~\cite{liljestrand2019pac} &
Stack pointer (SP) for return addresses, and a type identifier for indirect and data pointers. \\ \hline

\textbf{PACStack}~\cite{PACStack} &
Previous return address on the stack. \\ \hline

\textbf{PTAuth}~\cite{farkhani2021ptauth} &
A generated object-id. \\ \hline

\textbf{PACSan / PACMem}~\cite{li2022pacmem, li2022pacsan} &
A static random number generated at compile time. \\ \hline

\textbf{AOS}~\cite{kim2020hardware} &
Stack pointer (SP) for return addresses. \\ \hline

\textbf{PACTight}~\cite{ismail2022tightly} &
Pointer location and a random tag for sensitive data pointers; previous return address and unique function ID for return addresses. \\ \hline

\textbf{RSTI}~\cite{ismail2024enforcing} &
Unique mixture of pointer scope, type, permission, and location information. \\ \hline

\end{tabular}
\end{table}

Different pointer-protection schemes employ different modifier types, as summarized in Table~\ref{tab:modifier_usage}. 
Apple's PAC~\cite{apple:pac} uses a zero modifier for function and vtable pointers, requiring only a single unique value
In general, the number of modifier bits required is proportional to the number of distinct modifier values that must be represented.

PARTS-CFI~\cite{liljestrand2019pac} uses a 64-bit type identifier, but the effective entropy needed depends on the number of distinct pointer types and variables in the program (e.g., \texttt{int*}, \texttt{char*}, etc.) 
For example, in \texttt{xalancbmk}, one of the largest benchmarks in both SPEC CPU2006 and SPEC CPU2017, there are 2,558 pointer types and 32,097 pointer variables~\cite{ismail2024enforcing}, corresponding to roughly 12 bits of modifier space to ensure uniqueness.
The same reasoning extends to other PAC-based defenses. 
RSTI~\cite{ismail2024enforcing} derives modifiers from pointer scope, type, permission, and location, increasing the unique context count to 14,073 for \texttt{xalancbmk} and requiring approximately 14 bits; assigning a distinct modifier per pointer variable would require 16 bits. Since this benchmark represents the largest pointer footprint across the SPEC suites, we conclude that a 16-bit modifier space is sufficient for realistic workloads.

\diff{Since the use of $m_2$ reduces the maximum number of unique domains, once we decide the $|m|$ based on the required entropy for context, we use all $|m_1|_{max}$ for $m_1$, and the remaining entropy is assigned to $m_2$.}

\subsection{\diff{Memory Tagging and Address Width Scaling}}

\diff{
Modern architectural trends, such as Memory Tagging Extensions (MTE) and the expansion of Virtual Address (VA) widths, significantly constrain the available non-canonical bits within a 64-bit pointer. Memory Tagging associates memory regions with metadata tags stored in the upper pointer bits to detect spatial and temporal violations. Simultaneously, scaling the address width directly reduces the unused bits previously available for in-pointer security metadata.

\sys is designed to be agnostic to these architectural shifts. Since our encryption operates transparently on the pointer value, it preserves the integrity of any bits reserved by hardware for addressing or tagging. However, as the address space $A$ grows or the tag field $|tags|$ expands, the bit-budget for modifier $m_1$ is proportionally reduced. To maintain a constant security margin, \sys can meet the entropy requirements by using secondary modifier $m_2$.
}

\diff{
Under these constraints, the number of supported distinct security domains is bounded by the remaining key-separation entropy. Specifically, the effective domain-separation space becomes
$E = 128 - |m| - |Tag| + (64 - A)$. For example, with $|m|=16$ and $|Tag|=4$:  
(i) if $A=48$, then $E=124$ bits ($2^{124}$ domains);  
(ii) if $A=57$ (x86\_64 servers with 5-level page tables), then $E=115$ bits ($2^{115}$ domains).
}

\subsection{Discussion on Speculative Execution and Performance}

Pointer encryption serializes decryption with pointer dereference, introducing non-zero latency overhead on every protected access. PARTS~\cite{liljestrand2019pac} quantifies this at 4 cycles per dereference, yielding $<0.5\%$ overhead for code pointers but $\sim$20\% for all data pointers in nbench. We corroborate this effect using a pointer-chasing microbenchmark that measures the cost of accessing signed pointers on the Apple M1 processor.
Although such overheads may be acceptable for PAC-style deployments, they motivate architectural optimizations to reduce dereference latency.
Prior designs like C3~\cite{lemay2021cryptographic} discuss the optimizations like predictions, showing close to zero protection overhead. But C3 design is based on Intel architectures, limiting some optimization.

Code pointers are dereferenced in branch, jump, or return instructions. 
The branch predictor will still work as is. Branch target prediction like branch target buffer (BTB) and return \diff{address stack (RAS)} usually uses the PC of the current branch instruction for prediction instead of the pointer itself. Pointer encryption does not touch the predictor design, and thus, the prediction rate of the branch target will not be affected. With pointer encryption, the resolution of the branch will take one more cycle, which adds to the execution latency when a branch misprediction happens and has negligible overhead on a correct branch prediction. With a decent branch prediction, the performance overhead will be small, as shown in designs like PAC. \diff{We provide further proof of the effects of BTB and RAS on overhead in section~\ref{sec:evaluation}.}

%% file: 5.implementation.tex
\section{Implementation}

\paragraph{Hardware}We prototype \sys on a 64-bit RISC-V platform implemented on an AMD/Xilinx VCU118 (Virtex UltraScale+) FPGA \diff{booting a FireMarshal-managed Linux image~\cite{pemberton2021firemarshal}}. 
The system is built using Chipyard~\cite{amid2020chipyard} v1.8 and synthesized with Xilinx Vivado 2021.2. \diff{Our design extends both the in-order Rocket and out-of-order BOOM cores~\cite{asanovic2016rocket} }with a custom cryptographic accelerator for PRINCEv2 connected via the Rocket Custom Coprocessor (RoCC) interface. The accelerator operates on 64-bit pointers and communicates with the core through tightly coupled request and response queues. \diff{ We also implement QARMA in RoCC on the FPGA platform to use as an authentication-based PAC baseline.
We configure \sys with \(M_1 = 16\) bits and \(M_2 = 0\), as 16 bits suffice to uniquely distinguish all pointer contexts. 
Our Rocket and Large Boom cores are configured with default settings.
}

\paragraph{Compiler Support}

\diff{For return address protection,}  we use the stack pointer as the modifier input, which is similar to \texttt{RETAA} for PAC, and our compiler support is implemented in LLVM~\cite{lattner2004llvm} v18.1 by extending the RISC-V backend to instrument call and return sites with \sys sealing and unsealing instructions. 

\diff{To demonstrate compatibility, we leverage PacTight~\cite{ismail2022tightly} and its LLVM-based compiler framework, making minor modifications to its IR pass to target the RISC-V architecture.}

%% file: 6.evaluation.tex
\section{Evaluation}
\label{sec:evaluation}
\diff{
\subsection{Evaluation Method}

\textbf{Evaluation Setup.} 
We evaluate on both Rocket (in-order) and BOOM (out-of-order) cores on our FPGA platform running at 100,MHz. \sys provides pointer protection via PRINCEv2-based encryption, while our PAC implementation does authentication using our QARMA implementation; we additionally compare against Apple's PAC on the commercial M1 processor.}

\textbf{Performance Overhead.} 
We evaluate runtime overhead and the number of instructions retired relative to uninstrumented baselines across our benchmark suites. These metrics capture the execution cost introduced by \diff{the added instructions (e.g., fetching, decoding in the pipeline) and} pointer sealing and unsealing operations in \sys, reflecting both time and instruction-level overheads. 
\diff{Each experiment is repeated at least twice to ensure it is not affected by significant noise.}

\diff{\textbf{Compatibility.}
To demonstrate compatibility with prior PAC-based protection in the compiler, we adopt PacTight~\cite{ismail2022tightly} from Arm PAC to LIPPEN in RISC-V, to evaluate the migration complexity.}

\textbf{Hardware Cost.}
We report FPGA resource utilization (LUTs and flip-flops) for our Rocket and BOOM core with and without the encryption accelerator. This metric evaluates the hardware footprint of the pointer encryption logic.

\textbf{Power Consumption.}
Power estimates are obtained from post-synthesis analysis on the VCU118 board. This metric complements area evaluation and quantifies the energy efficiency of the encryption accelerator.

\begin{figure*}[t]
  \centering
  \includegraphics[width=\textwidth]{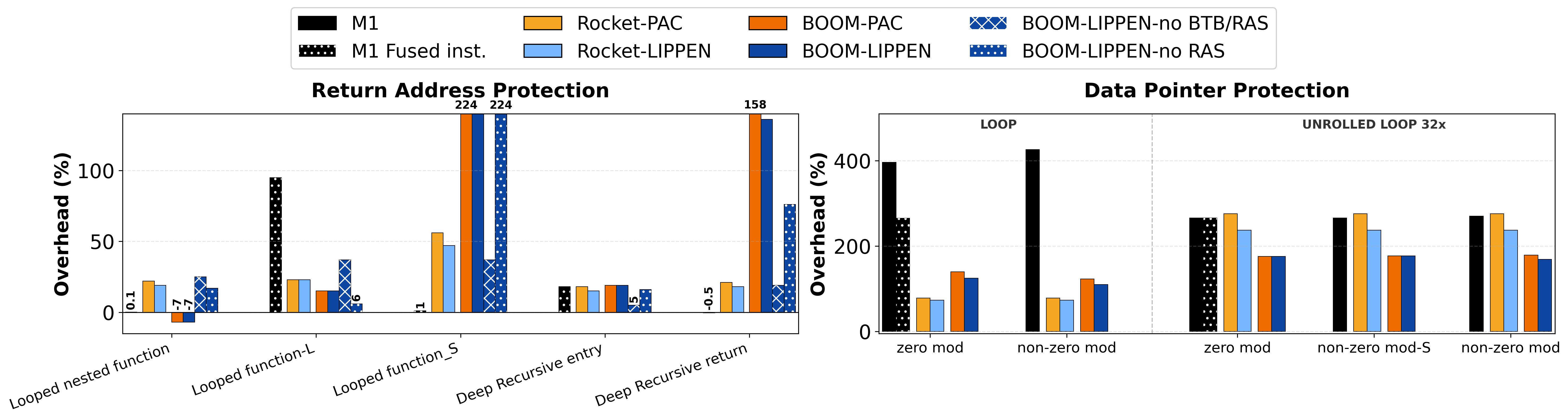}
  \caption{\diff{Microbenchmark results for return-address (left) and data-pointer (right) protection. Runtime slowdowns are normalized to the unprotected baseline on the same processor and configuration.}}
  \label{fig:micro_bench_result}
\end{figure*}

\subsection{Benchmarks}
\diff{
\textbf{Micro-benchmarks.}
We design a set of targeted microbenchmarks to isolate the runtime overheads of data pointer and return address protection. 

\textit{Return address protection.}
To isolate signing and authentication costs under different call-stack behaviors, we design four microbenchmarks:
(1) Looped nested function calls: a loop of nested function calls with a depth of 8 to represent common user programs with nested function calls.
(2) Looped function calls: a function repeatedly invoked in a tight loop. We implement two variants: \texttt{function\_S}, containing a single \texttt{xor}, and \texttt{function\_L}, containing three \texttt{xor}s and one memory access.
Timing measurements include both signing and authentication, capturing the combined per-call overhead.
(3) Deep Recursive entry-only: a very deep recursive function{ (depth=4096)} that measures only function entry (and signing).
(4) Deep Recursive return-only: a very deep recursive function (depth=4096) that measures only function return (and authentication).
All benchmarks are automatically instrumented: on RISC-V using our LLVM-based compiler pass, and on Arm using the \texttt{arm64e} compilation flag. 
Together, these microbenchmarks measure the performance of function entry and return in different return address prediction scenarios, enabling precise characterization of \sys’s protection overhead.

\textit{Data pointer protection.}
We implement a pointer-chasing benchmark that forms a strictly serialized load chain, ensuring authentication lies on the critical path. Two variants are evaluated: (1) \emph{Loop}, with a single dependent load per iteration, and (2) \emph{Unrolled}, with 32 dependent loads per iteration to amortize the impact of loop instructions. For each, we test authentication with (a) zero modifier, (b) a shared non-zero modifier for all accesses per iteration, and (c) a non-zero modifier loaded per access. 
We manually insert load and authentication instructions at the assembly level to ensure precise control over the execution sequence.
We additionally evaluate Apple M1’s fused load-and-authenticate instruction. These configurations isolate intrinsic authentication latency, modifier-fetch overhead, and fusion benefits.

}

\diff{\textbf{End-to-end evaluation.}}

\diff{
We first evaluate \sys using the nbench suite to characterize overhead on lightweight compute kernels. These benchmarks stress arithmetic and memory subsystems in isolation, enabling controlled measurement of pointer-protection cost without full-application complexity.}

We then evaluate \sys on the SPEC CPU2017 rate suite, covering diverse compute- and memory-intensive workloads representative of modern systems. Our study includes C and C++ benchmarks compiled with LLVM-based instrumentation, spanning integer workloads (e.g., \texttt{perlbench\_r}, \texttt{gcc\_r}, \texttt{xalancbmk\_r}) and floating-point workloads (e.g., \texttt{lbm\_r}, \texttt{namd\_r}), thereby capturing varied control-flow and memory-access behaviors.
\diff{Certain SPEC benchmarks are excluded due to interactions with the C++ exception-handling runtime. During stack unwinding, sealed pointers may be dereferenced without prior unsealing, causing incorrect control flow. Supporting this would require modifications to system libraries (e.g., recompiling GCC runtime components to insert unsealing), which is orthogonal to \sys’s architectural design and left to future work.}

\subsection{Performance Results}
\label{sec:performance-results}
\diff{\textbf{Micro-benchmark results.}

\diff{
\textit{(a) Return address protection.}
The left side of Figure~\ref{fig:micro_bench_result} reports the overhead of return-address protection.
Overall, \sys on BOOM is comparable to Apple’s M1 in most cases, and \sys has a slightly smaller overhead than PAC on the FPGA platforms.
The dominant factor influencing performance is return prediction rather than the protection primitive itself. 
For example, in the looped nested-function call benchmark, both BOOM and Apple M1 show negligible overhead or even small performance improvements.
 Disabling the RAS on BOOM significantly increases overhead. 
For looped single function calls, the overhead varies across architectures and function body sizes.  The extremely small function body (looped function-S) increases misprediction frequency on BOOM, leading to substantially higher overhead compared to looped function-L, while M1 has the opposite behavior. 
 For deep recursive returns, where the BOOM RAS capacity is exceeded, disabling both the RAS and BTB on BOOM can reduce overhead, as repeated mispredictions and recovery penalties otherwise amplify authentication latency. Meanwhile, Apple M1 shows negligible overhead because M1's predictor can still make correct return predictions. Deep recursion entry operations, however, show relatively low sensitivity to prediction structures, since signing occurs prior to control-flow resolution and Rocket, BOOM, and M1 show similar overheads.
When prediction mechanisms are effective, BOOM achieves overhead comparable to commercial M1 implementations while preserving the protection guarantees of \sys.

\textit{(b) Data pointer protection.} 
The right side of Figure~\ref{fig:micro_bench_result} reports the overhead of data-pointer protection due to authentication.
 The overhead of \sys in BOOM is comparable to that of Apple’s M1. On FPGA,  \sys consistently tracks QARMA-based PAC implementations. 
Interestingly, introducing an additional load for the modifier does not significantly affect performance across architectures. Although a non-zero modifier requires additional load to fetch its value, this load is typically independent of the critical load-use chain and can overlap with other in-flight operations and in our benchmark the load of modifier will always hit in L1. 
Finally, we attribute the difference between looped and unrolled 32$\times$ configurations to the differences in how the pipelines handle loops, e.g., branch prediction. The unrolled 32$\times$ amortized the effect of loops. 
These results demonstrate that full pointer encryption is not more expensive than pointer authentication in commercial processors while providing a much stronger security level.
}

\begin{figure*}[t]
  \centering
  \includegraphics[width=\textwidth]{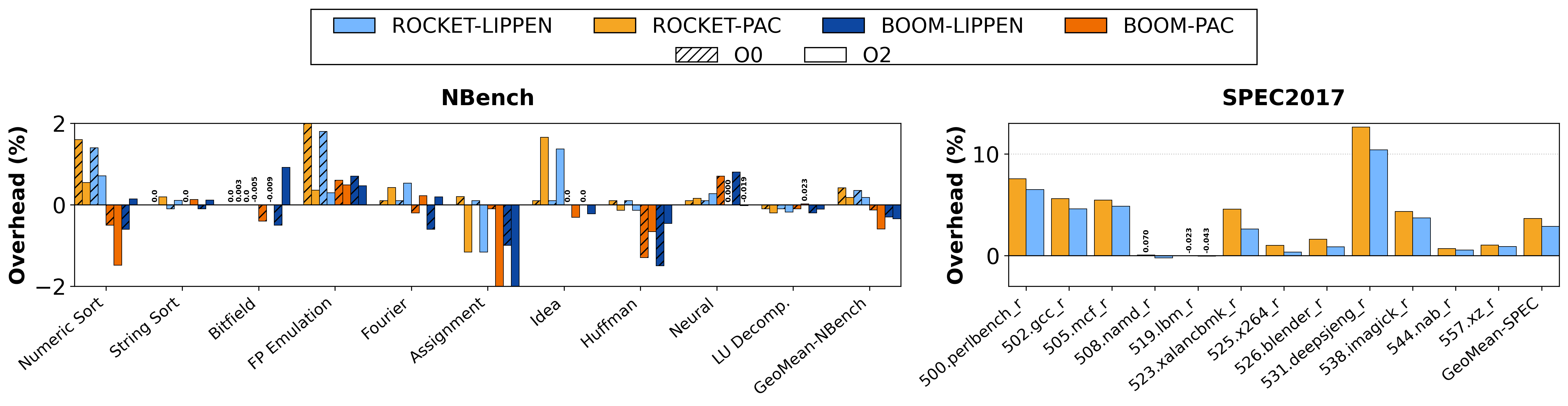}
  \caption{\diff{Overhead across NBench (left) and SPEC CPU2017 (right). Purple bars indicate dynamic instruction count increase, while other bars show runtime overhead, all normalized to the corresponding unprotected baseline on the same processor.}}
  \label{fig:overheads}
\end{figure*}

}

\diff{
\textbf{End-to-end evaluation results.}

Figure~\ref{fig:overheads} reports the runtime overhead of \sys relative to PAC across nbench and SPEC2017, normalized to the uninstrumented baseline. 
On nbench, both \sys and PAC incur only 0.2\% additional dynamic instructions and a 0.2\% geometric mean overhead on Rocket under O2 optimization, while BOOM overhead is effectively 0\%. Under O0, overheads remain comparably low, with geometric means of 0.35\% and 0.42\% for \sys and PAC on Rocket, and near-zero on BOOM. These results are consistent with prior reports for return-address protection on nbench (e.g., 0.5\% in PARTS~\cite{liljestrand2019pac} and 0.11\% in Rettag~\cite{wang2022rettag}), confirming negligible cost when call density is low.
Across SPEC on Rocket, return-address protection increases dynamic instructions by ~1\% on average, yielding geometric mean overheads of 2.9\% for \sys and 3.6\% for PAC. Control-flow-intensive workloads (e.g., \texttt{perlbench\_r}, \texttt{leela\_r}, \texttt{deepsjeng\_r}) exhibit higher overheads (6--12\%), whereas compute- and memory-bound applications (e.g., \texttt{namd\_r}, \texttt{lbm\_r}, \texttt{nab\_r}) remain near baseline.
Overall, \sys matches, and occasionally slightly outperforms PAC while providing stronger integrity and confidentiality guarantees. Overheads remain modest and are driven primarily by control-flow intensity rather than by full-pointer encryption.

}

\diff{
\subsection{Compatibility with Prior Work}

\begin{figure}[t]
  \centering
  \includegraphics[width=\columnwidth]{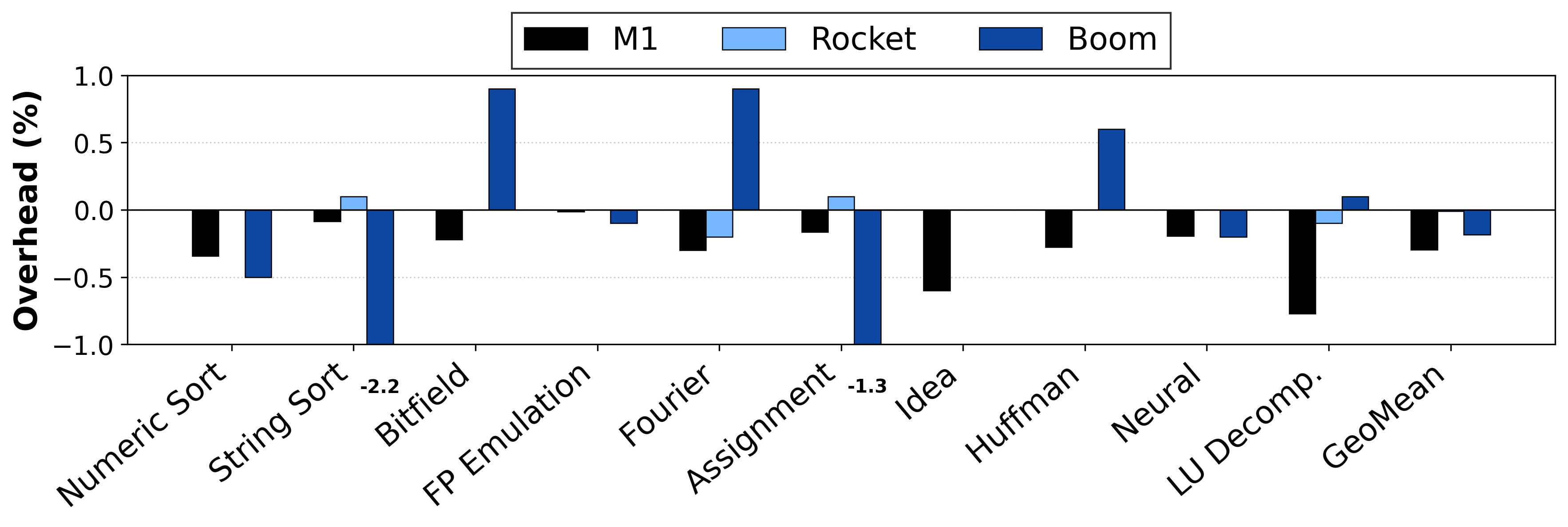}
  \caption{\diff{Performance comparison of PacTight on different cores}}
  \label{fig:pactight}
\end{figure}

PacTight~\cite{ismail2022tightly} enforces pointer integrity using strong, unique modifiers to protect sensitive pointers and provide spatial and temporal memory safety. It instruments programs via an LLVM IR pass that inserts PAC signing and authentication instructions.
To evaluate compatibility, we integrated \sys into PacTight’s compilation flow by replacing the original Arm \texttt{pacia} and \texttt{authia} instructions with our RISC-V \texttt{seal} and \texttt{unseal} instructions. This required only minor modifications to the IR pass (fewer than 50 lines of code), indicating minimal integration effort and no structural changes to the underlying protection model.
While small ISA-specific differences exist in the IR representation between Arm and RISC-V, the resulting binaries are instrumented equivalently to the original PacTight design (Complete similarity in the PacTight's provided example code and nbench). 
As presented in Figure~\ref{fig:pactight}, compiling and running nbench with the adapted PacTight infrastructure on Apple M1, Rocket, and BOOM results in negligible performance overhead, typically under 1\%. Notably, the Apple M1 and BOOM consistently exhibit slight performance improvements (negative overhead). These results demonstrate that \sys composes cleanly with prior PAC-based defenses while preserving their low cost profile across different microarchitectures.

As presented in Figure~\ref{fig:pactight}, running nbench with the adapted PacTight infrastructure on Apple M1, Rocket, and BOOM incurs negligible overhead under \texttt{-O0}, demonstrating that \sys composes cleanly with prior PAC-based defenses while preserving their expected behavior and cost profile. Under \texttt{-O2}, M1 overhead remains near-zero, while Rocket and BOOM increase to $7.50\%$ and $2.88\%$, respectively. We attribute this to the compiler treating \sys instructions as opaque barriers, preventing code motion and scheduling optimizations around them. 
}

\begin{table}[t]
  \centering
  \caption{FPGA area and power comparison \diff{on Rocket}.}
  \label{tab:area-overhead}
  \renewcommand{\arraystretch}{1.1}
  \setlength{\tabcolsep}{1.2pt} 
  \hspace{-6pt}
  \begin{tabular*}{\columnwidth}{@{\extracolsep{\fill}} l cccccc cc @{}}
    \toprule
    \textbf{Config} &
    \multicolumn{2}{c}{\textbf{Chip}} &
    \multicolumn{2}{c}{\textbf{Core}} &
    \multicolumn{2}{c}{\textbf{ROCC}} &
    \textbf{Max Freq.} &
    \textbf{Power} \\
    \cmidrule(lr){2-3} \cmidrule(lr){4-5} \cmidrule(lr){6-7}
     & \textbf{LUT} & \textbf{FF} 
     & \textbf{LUT} & \textbf{FF} 
     & \textbf{LUT} & \textbf{FF} 
     & \textbf{(MHz)} & \textbf{(W)} \\
    \midrule
    Rocket-base     & 56{,}311 & 41{,}856 & 28{,}942 & 14{,}570 & --- & --- & 150 & 3.935 \\
    Rocket-RoCC    & 56{,}498 & 41{,}911 & 29{,}120 & 14{,}607 & 47 & 3 & 110 & 3.934 \\
    Rocket-\sys & 58{,}137 & 42{,}100 & 30{,}837 & 14{,}793 & 1{,}034 & 131 & 99 & 4.035 \\
    Rocket-PAC    & 58{,}519 & 42{,}110 & 31{,}212 & 14{,}791 & 2{,}071 & 132 & 89 & 4.031 \\
    BOOM-base    & 241{,}340 & 97{,}455 & 227{,}187 & 91{,}609 & --- & --- & 93 & 5.389 \\
    BOOM-RoCC    & 248{,}914 & 98{,}245 & 234{,}783 & 92{,}392 & 48 & 3 & 86.5 & 5.532 \\
    BOOM-\sys    & 248{,}416 & 98{,}674 & 234{,}289 & 92{,}819 & 862 & 193 & 90.6 & 5.625 \\
    BOOM-PAC    & 251{,}299 & 97{,}919 & 236{,}953 & 92{,}074 & 2{,}054 & 131 & 73.5 & 5.606 \\

    \bottomrule
  \end{tabular*}
\end{table}

\subsection{Hardware Cost and Power Results}

Table~\ref{tab:area-overhead} summarizes FPGA resource utilization, maximum frequency, and power (measured at 100,MHz) across all configurations on both cores, where Rocket-RoCC and BOOM-RoCC instantiate the ROCC interface without any cipher, Rocket-\sys and BOOM-\sys add the PRINCEv2-based encryption accelerator, and Rocket-PAC and BOOM-PAC add the QARMA-based authentication accelerator. For both cores, the QARMA configuration incurs slightly more logic than PRINCEv2 due to its more complex cipher datapath. The RoCC-only variants reveal that the ROCC interface itself accounts for the majority of the frequency reduction, with the cipher contributing only marginally on top for Rocket; on BOOM, however, the cipher datapath plays a more significant role in degrading the maximum frequency. Power overhead remains marginal across all configurations on both cores. Overall, \sys achieves full-pointer encryption at hardware and power costs commensurate with cipher complexity, confirming its practicality for integration into modern processor pipelines. 

%% file: 6.discussion.tex
\section{\diff{Discussion}}
\subsection{\diff{Secure Key Management}}
\diff{

Secure key management is essential to prevent key exposure and cross-domain forgery. While Arm Pointer Authentication (PA) provides hardware key registers~\cite{armv8_arm_key}, sharing keys across exception levels can enable cross-domain attacks~\cite{cai2023demystifying}. Recent designs, such as Apple’s M-series processors, address this via per-VM, per-EL, and per-boot key isolation using hardware-backed diversification and internal key derivation~\cite{cai2023demystifying}.
\sys can use the same mechanisms. Because sealing and unsealing are keyed primitives analogous to PAC instructions, domain-specific keys and hierarchical derivation apply directly. Thus, PAC-style key isolation extends to \sys without architectural changes, keeping key management orthogonal to the encryption mechanism while leveraging existing secure hardware deployments.

}

\subsection{\diff{Pointer Arithmetic}}
\diff{
A potential concern when protecting pointers is the cost of pointer arithmetic. In C and C++, pointers can be incremented or adjusted (e.g., during array traversal), and pointer authentication and encryption will complicate the arithmetic by requiring an authentication/decryption before the pointer arithmetic. However, pointer arithmetic without a subsequent memory access is rare in practice. Most arithmetic operations on pointers are immediately followed by pointer dereference, and one can optimize the order of decryption and arithmetic (in the compiler). Implementations of prior PAC-based systems that protect all pointers, such as RSTI~\cite{ismail2024enforcing} and AOS~\cite{kim2020hardware}, show that pointer arithmetic is not the main bottleneck, and the main bottleneck is still the authentication (decryption) itself. 
}

%% file: 7.related.tex
\section{Related Work}
We have so far discussed mitigation techniques that ensure pointer integrity, focusing primarily on approaches with zero memory footprint and straightforward deployability on existing hardware. In this section, we broaden the discussion to include other methods that aim to provide general memory safety.

\subsection{Capability and Tagged Architectures}

Capability-based and tagged architectures associate each pointer with hardware-managed metadata encoding bounds, permissions, and provenance, validated on every memory access. CHERI~\cite{woodruff2014cheri} is the most prominent example, providing strong spatial and temporal memory safety at the cost of expanded pointer representations and significant ISA and ABI changes that complicate deployment within existing software ecosystems.

Secure tagged architectures~\cite{gollapudi2023control} similarly attach metadata tags to memory and registers to enforce security policies at fine granularity. These tags can represent pointer authenticity, data provenance, or privilege levels, and are checked dynamically during instruction execution. 
ZeRØ~\cite{ziad2021zero} introduces new memory instructions and a metadata \emph{encoding} scheme (e.g., type bits in pointers plus per-line bit-vectors) so that pointer locations can only be written by authorized instructions; invalid writes are rejected, allowing resilient operation under attack. 
Tag-based enforcement provides flexibility but incurs additional storage and lookup overheads and often depends on compiler or OS support to manage tag propagation.

\subsection{Memory Safety}

A complementary class of defenses targets memory safety rather than pointer integrity~\cite{devietti2008hardbound, burow2018cup, sasaki2019practical, serebryany2018memory, lin2024camp, sparcadi, armmte, schrammel2023memes, ziad2021no, farkhani2021ptauth, li2022pacmem, kim2020hardware}. These mechanisms associate additional metadata with memory regions to detect spatial and temporal errors such as buffer overflows and use-after-free vulnerabilities. Hardware memory tagging schemes, such as Arm’s Memory Tagging Extension (MTE)~\cite{armmte} and SPARC ADI~\cite{sparcadi}, assign small tags (typically 4--8 bits) to memory blocks and to pointers. Each memory access compares the pointer’s tag with the memory’s tag, triggering an exception on mismatch. Software-based approaches~\cite{serebryany2018memory, lin2024camp} implement a similar mechanism in software with compiler instrumentation.

A recent line of work leverages pointer authentication for enforcing spatial and temporal memory safety~\cite{farkhani2021ptauth, li2022pacmem, kim2020hardware}. They build upon Arm Pointer Authentication Codes (PAC) to cryptographically bind pointers to their allocation or lifetime metadata, thereby preventing out-of-bounds and use-after-free violations. 

Other proposals extend tagging granularity or functionality. Califorms~\cite{sasaki2019practical} introduces fine-grained, byte-level tagging to blacklist invalid memory regions, while MEMES~\cite{schrammel2023memes} employs lightweight encryption to enforce spatial and temporal safety on commodity hardware. Similarly, systems such as CUP~\cite{burow2018cup}, HardBound~\cite{devietti2008hardbound}, CAMP~\cite{lin2024camp}, and No-FAT~\cite{ziad2021no} combine compiler or architectural support with metadata tracking to enforce object and bounds checking. These designs strengthen full memory safety but maintain separate metadata storage or architectural state, resulting in additional memory footprint and runtime overhead.

%% file: 8.conclusion.tex
\section{Conclusion}
This paper presented \sys, a low-overhead pointer-encryption architecture that eliminates the limitations of authentication-based pointer integrity enforcement through full-pointer encryption. By encrypting the entire pointer value rather than storing a truncated authentication code, \sys maximizes entropy, achieves brute-force resilience, and unifies protection for both code and data pointers. The design maintains seamless compatibility with the existing PAC software stack and ABI, requiring no changes to compilers or operating systems. 
Our prototype on a RISC-V FPGA platform protects return addresses and demonstrates that \sys delivers strong pointer integrity with lower cost compared to PAC and minimal hardware and power increases (below 4\%). These results show that full-pointer encryption is both practical and effective, offering cryptographic-level protection against pointer forgery while remaining suitable for deployment in modern processors.

\section{Acknowledgment}
At Virginia Tech, this project is partially supported by Commonwealth Cybersecurity Initiative (CCI), by the National Science Foundation (NSF) under grant CCF-2153748, CNS-2442993. LLMs were used for editorial purposes, with all outputs inspected by the authors to ensure accuracy and originality.

%% file: 9.appendix.tex
\section{Appendix}
\subsection{Abstract}
The artifact is designed to enable reproduction of the core components and a representative subset of the evaluation results, while supporting full reproduction given additional hardware and setup time.

The artifact guides users through compiling and simulating the Chipyard design with Verilator and running small test programs. It also includes instructions for generating a bitstream for the FPGA implementation. To run the microbenchmarks on our prototype, users need access to a Xilinx VCU118 FPGA and must follow the documented steps for building the Linux image, preparing the SD card with the required RISC-V binaries, and transferring files through the SD card workflow. In addition, the artifact includes microbenchmarks for ARM64 processors, which we tested on Apple M1 systems and expect to work on other Apple M-series processors as well. Because Apple restricts the use of PAC instructions in user-space programs, users must disable System Integrity Protection before running those experiments.

\subsection{Artifact check-list (meta-information)}
\begin{itemize}
    \item \textbf{Compilation:} Compiling the modified Chipyard hardware design, the LLVM-based compiler toolchain, and the provided microbenchmarks.
    \item \textbf{Run-time environment:} Ubuntu Linux for building and simulation; firemarshal linux for runnign on FPGA; macOS for ARM64 microbenchmark experiments.
    \item \textbf{Hardware:} A server or workstation for building and simulation; a Xilinx VCU118 FPGA for FPGA-based experiments; an Apple M1-based machine for ARM64 microbenchmark evaluation.
    \item \textbf{Execution:} Running Verilator-based simulations, generating FPGA bitstreams, booting Linux on the FPGA prototype, and executing the provided microbenchmarks both on FPGA and Apple M1.
    \item \textbf{Output:} Generated simulation binaries, FPGA bitstreams, compiled benchmark binaries, and performance measurement results.
    \item \textbf{How much time is needed to prepare workflow (approximately)?:} Around 30--60 minutes for software and simulation setup; the FPGA synthesis and bitstream generation take hours.
    \item \textbf{Publicly available?:} Yes, artifacts can be found here:
    https://doi.org/10.5281/zenodo.19901476 

    https://github.com/bearhw/LIPPEN
    \item \textbf{Code licenses (if publicly available)?:} GNU General Public License v3.0.
 
\end{itemize}

\subsection{Description}

\subsubsection{How to access}
The artifact can be accessed here https://doi.org/10.5281/zenodo.19901476 .  It can also be accessed through the public GitHub repository by cloning the repository https://github.com/bearhw/LIPPEN. The README file in the repository contains the full instructions for building and running the artifact.

\subsubsection{Hardware dependencies}
The artifact supports multiple workflows with different hardware requirements. For simulation and software compilation, a standard Linux server or workstation is sufficient. For FPGA-based evaluation, the artifact requires a Xilinx VCU118 FPGA board. For the ARM64 microbenchmark experiments, the artifact requires an Apple M1 system, and it should also work on other Apple M-series processors.

\subsubsection{Software dependencies}
The artifact requires a Linux environment with the dependencies needed to build the modified Chipyard design, the LLVM-based compiler toolchain, and the provided benchmarks. Running the FPGA workflow additionally requires the tools needed to generate the bitstream, build the Linux image, prepare the SD card, and execute binaries on the prototype. For the ARM64 microbenchmark workflow, the artifact requires macOS with the appropriate build tools installed. Because Apple restricts the use of PAC instructions in user-space programs, System Integrity Protection must be disabled before running those experiments.

\subsection{Installation}
Clone the GitHub repository. Then follow the instructions in the README file to set up the build environment, compile the required components, and prepare the selected workflow. The README describes separate steps for simulation, FPGA-based evaluation, and ARM64 microbenchmark experiments.

\subsection{Experiment workflow}
\begin{enumerate}
    \item Clone the repository and initialize the required submodules and dependencies.
    \item Build the modified Chipyard hardware design and the LLVM-based compiler toolchain by following the provided scripts.
    \item Run the Verilator-based simulation flow to verify the design and execute the included small test programs.
    \item If FPGA evaluation is desired, generate the FPGA bitstream and prepare the Linux image for the VCU118 platform.
    \item Load the required binaries and files onto the SD card and boot the system on the FPGA prototype.
    \item Execute the provided microbenchmarks and collect the performance results.
    \item For the ARM64 workflow, compile and run the microbenchmarks on an Apple M1 or another Apple M-series processor after disabling System Integrity Protection.
\end{enumerate}

\subsection{Evaluation and expected results}
Successful execution of the artifact is demonstrated by:
\begin{enumerate}
    \item Correctly building the modified Chipyard design and LLVM-based toolchain.
    \item Running the provided test programs in the Verilator simulation.
    \item Reproducing the experimental results from Figure~4 of the paper.
\end{enumerate}

The FPGA-based experiments require access to a Xilinx VCU118 board and involve significant synthesis and setup time. 
Similarly, the ARM64 microbenchmark experiments require Apple M-series hardware and additional system configuration (e.g., disabling System Integrity Protection).